\providecommand{\algorithmname}{Algorithm}
\theoremstyle{remark}
\newtheorem{theorem}{Theorem}
\theoremstyle{remark}
\newtheorem{example}{Example}
\title{Optimal Scalar Linear Index Codes for Some Symmetric Multiple Unicast Problems}
\begin{document}
\author{Mahesh Babu Vaddi, Roop Kumar Bhattaram and B.~Sundar~Rajan, {\it Fellow,~IEEE}
}
\maketitle
\begin{abstract}
The capacity of symmetric instance of the multiple unicast index coding problem with neighboring antidotes (side-information) with number of messages equal to the number of receivers was given by Maleki, Cadambe and Jafar in \cite{MCJ}.  In this paper we consider ten symmetric multiple unicast problems with lesser antidotes than considered in \cite{MCJ} and explicitly construct scalar linear codes for them. These codes are shown to achieve the capacity or equivalently these codes shown to be of optimal length.  Also, the constructed codes enable the receivers use small number of transmissions to decode their wanted messages which is important to have the probability of message error reduced in a noisy broadcast channel \cite{TRCR}, \cite{AnR}. Some of the cases considered are shown to be critical index coding problems and these codes help to identify some of the subclasses considered in \cite{MCJ} to be not critical index coding problems. \footnote{The authors are with the Department of Electrical Communication Engineering, Indian Institute of Science, Bangalore-560012, India. Email:bsrajan@ece.iisc.ernet.in.} 
\end{abstract}
\section{Introduction}
\label{sec:Introduction}
\IEEEPARstart {T}{he} problem of index coding with side information was introduced by Birk and Kol \cite{BiK}. Bar-Yossef \textit{et al.} \cite{YBJK} studied a type of index coding problem in which each receiver demands only one single message and the number of receivers equals number of messages. Ong and Ho \cite{OnH} classify the binary index coding problem depending on the demands and the side information possessed by the receivers. An index coding problem is unicast if the demand sets of the receivers are disjoint. If the problem is unicast and if the size of each demand set is one, then it is said to be single unicast. It was found that the length of the optimal linear index code is equal to the minrank of the side information graph of the index coding problem but finding the minrank is NP hard. \\

Recently, it has been observed that in a noisy index coding problem it is desirable for the purpose of reducing the probability of error that  the receivers use as small a number of transmissions from the source as possible and linear index codes with this property have been reported in \cite{TRCR}, \cite{KaR}. While the report \cite{TRCR} considers fading broadcast channels, in \cite{AnR} AWGN channels are considered and it is reported that linear index codes with minimum length (capacity achieving codes or optimal length codes) help to facilitate to achieve more reduction in probability of error compared to non-minimum length codes for receivers with large amount of side-information.\\

Maleki \textit{et al.} \cite{MCJ} found the capacity of symmetric multiple unicast index problem with neighboring antidotes (side information). In a symmetric multiple unicast index coding problem with equal number of $K$  messages and source-destination pairs, each destination has a total of $U+D=A<K$ antidotes, corresponding to the $U$ messages before (``up" from) and $D$ messages after (``down" from) its desired message. In this setting, the $k-$th receiver $R_{k}$ demands the message $x_{k}$ having the antidotes 
\begin{equation}
\label{antidote}
\{x_{k-U},\dots,x_{k-2},x_{k-1}\}~\cup~\{x_{k+1}, x_{k+2},\dots,x_{k+D}\}.
\end{equation}
The symmetric capacity of this index coding problem setting is shown to be as follows:
\begin{flushleft}
$U,D \in$ $\mathbb{Z},$\\
$0 \leq U \leq D$,\\
$U+D=A<K$,\ is\\
$C=\left\{
                \begin{array}{ll}
                  {1,\qquad\quad\ A=K-1}\\
                  {\frac{U+1}{K-A+2U}},A\leq K-2\qquad $per message.$
                  \end{array}
              \right.$
\end{flushleft}
\ \\
The above expression for capacity per message can be expressed as below for arbitrary $U$ and $D$:
\begin{equation}
\label{capacity}
C=\left\{
                \begin{array}{ll}
                  {1 ~~~~~~~~~~~~~~~~~~~~~~~~~~~~~ \mbox{if} ~~ U+D=K-1}\\
                  {\frac{min(U,D)+1}{K+min(U,D)-max(U,D)}} ~~~ \mbox{if} ~~U+D\leq K-2. 
                  \end{array}
              \right.
\end{equation}
The side information in the above index coding problem is represented by a directed graph G = ($V$,$E$) with $V = \{1,2,...,K\}$ is the set of vertices and E is the set of edges such that the directed edge $(i,j)\in E$ if receiver (destination) $R_{i}$ knows $x_{j}$. This graph $G$ for a given index coding problem is called side information graph. Let $G$ be a directed graph of $K$ vertices without self loops. A 0-1 matrix $A=(a_{i,j})$ fits in $G$ 
if $a_{i,i}=1$ for all $i$ and $a_{i,j}$=0 whenever $(i,j)$ is not an edge of $G$. Let $rk_{2}()$ denotes the rank of the 0-1 matrix over $GF(2)$. The $minrank_{2}(G)$ is defined as \cite{YBJK}
$$minrank_{2}(G) \triangleq min\{rk_{2}(A) : A \ fits \ in \ G\}.$$
In a given index coding problem with side information graph $G$ = $(V,E)$, an edge $e\in E$ is said to be critical if the removal of $e$ from $G$ strictly reduce the capacity. The index coding problem $G=(V,E)$ is critical if every $e\in E$ is critical \cite{TSG}.  

In the setting of \cite{MCJ} with one sided antidote cases, i.e., the cases where $U$ or $D$ is zero, 
without loss of generality, we can assume that $max(U,D)= D$ and $min(U,D)=0$ (all the results hold when $max(U,D)=U$), i.e., 
\begin{equation}
\label{antidote1}
{\cal K}_k =\{x_{k+1}, x_{k+2},\dots,x_{k+D}\}, 
\end{equation}
\noindent
for which \eqref{capacity} reduces to 
\begin{equation}
\label{capacity1}
C=\left\{
                \begin{array}{ll}
                  {1 ~~~~~~~~~~~~ \mbox{if} ~~ D=K-1}\\
                  {\frac{1}{K-D}} ~~~~~~ \mbox{if} ~~D\leq K-2. 
                  \end{array}
              \right.
\end{equation}
symbols per message. 

\subsection{Contributions}
In this paper we consider the following ten cases of symmetric multiple unicast problems which are subclasses of the problems discussed in \cite{MCJ} with one sided antidotes: \\
\noindent
{\bf Case I:} ~~ $D$ divides $K$ and ${\cal K}_k =\{x_{k+D}\}$. \\
{\bf Case II:}~ $K-D$ divides $K$ and $${\cal K}_k=\{x_{k+K-D},x_{k+2(K-D)},...,x_{k+D}\}.$$
{\bf Case III:} $D-\frac{K}{2}$ divides $\frac{K}{2} $ and \\
 $${\cal K}_k=\{x_{k+\frac{K}{2}},x_{k+D-\frac{K}{2}},x_{k+D}\}.$$
{\bf Case IV:}~  $\frac{K}{2}-D$ divides $D$ and $${\cal K}_k=\{x_{k+\frac{K}{2}-D},x_{k+2(\frac{K}{2}-D)}+ \cdots +x_{k+D}\}.$$
{\bf Case V:}~~There is an integer $\lambda$ such that $D$ divides $K-\lambda$ and $\lambda$ divides $D$ and 
$${\cal K}_k=\left\{
                \begin{array}{ll}
                  \{x_{k+D}\},\ $if$\ k\leq K-D-\lambda\\
                  \{x_{k+\lambda},x_{k+2\lambda}\dots,x_{k+D}\},\ $if$\ K-D-\lambda<k\leq K
                  \end{array}
              \right.
$$
{\bf Case VI:}~There is an integer $\lambda$ such that $K-D$ divides $K-\lambda$ and $\lambda$ divides $K-D$ and $\mathcal{K}_k$ is as in \eqref{antidote1}.\\ \\
{\bf Case VII:} There is an integer $\lambda$ such that $D+\lambda$ divides $K$ and $\lambda$ divides $D$ and 
$${\cal K}_k=\{x_{k+\lambda},x_{k+2\lambda},...,x_{k+D}\}.$$\\
{\bf Case VIII:} There is an integer $\lambda$ such that $K-D+\lambda$ divides $K$ and $\lambda$ divides $K-D$ and 
$${\cal K}_k=\left\{
                \begin{array}{ll}
                  \{x_{k+\lambda},\ x_{k+\lambda+(K-D)},\\x_{k+2\lambda+(K-D)},\ x_{k+2\lambda+2(K-D)},\\ \qquad \vdots \qquad \qquad \qquad \qquad \vdots\\x_{k+(p-1)\lambda+(p-2)(K-D)},\ x_{k+(p-1)\lambda+(p-1)(K-D)},\\x_{k+p\lambda+(p-1)(K-D)}\}\\
                 
                  \end{array}
              \right.
$$
where $p=\frac{K}{K-D+\lambda}$.\\ \\
{\bf Case IX:} There is an integer $\lambda$ such that $D$ divides $K+\lambda$ and $\lambda$ divides $D$ and 
$${\cal K}_k=\left\{
                \begin{array}{ll}
                  \{x_{k+D}\}$, if $\ k\leq K-2D+\lambda\\
                  \{x_{k+\lambda}, x_{k+2\lambda},...,x_{k+D}\}$, if $K-2D+\lambda<k\leq K
                  \end{array}
              \right.$$
{\bf Case X:} There is an integer $\lambda$ such that $K-D$ divides $K+\lambda$ and $\lambda$ divides $K-D$ and $\mathcal{K}_k$ is as in  \eqref{antidote1}.\\

We show that for all the ten cases listed above the capacity is given by \eqref{capacity1}. This done by way of presenting linear index codes of length $\frac{1}{K-D}$ for all the ten cases. Also we identify the critical index coding problems among these. \\

The codes proposed in this paper have the property that most of the receivers use a very small number of transmissions to decode their messages. We give the exact number of transmissions used by each of the receivers for all the ten cases. \\ 

Throughout, binary field is assumed but all the results can be easily extended to any finite field.\\
\section{Scalar linear codes for some specific antidote settings}
\label{sec2}
In this section we present linear index codes for all the ten cases listed in the previous section and also show that the proposed codes are capacity achieving. Recall that the $k-$th receiver $R_k$  wants the message $x_k$ and has the antidote  ${\cal K}_k.$  \\

A linear index code on $K$ message symbols $\{x_1, x_2, \cdots, x_K \}$ is given by $\underline{x}L = \underline{y}$ where $\underline{x}$=$[x_{1},x_{2}, \dots ,x_{K}],$ $L$ is a $K \times N$ matrix called the generator matrix of the code, and $\underline{y}$=$[y_{1},y_{2}, \dots ,y_{N}].$ The entries of the codeword $\{y_1, y_2, \cdots, y_N \}$ are also called code symbols. The code is also described by $\mathfrak{C}=\{y_{1},y_{2}, \dots ,y_{N}\}.$ \\

\noindent
{\bf Case I:} $D$ divides $K$ and ${\cal K}_k =\{x_{k+D}\}$
\begin{theorem}
\label{thm1}
If $D$ divides $K$ and ${\cal K}_k =\{x_{k+D}\},$ then the proposed  optimal length scalar linear code is 
{\small
$$\mathfrak{C}=\{x_{i+(j-1)D}+x_{i+jD}|~i=1,2,\dots,D,~ j=1,2,\dots, \frac{K}{D}-1\}.$$ 
}
Moreover, the problem is a critical index coding problem and the $minrank$ of the corresponding side information graph is $(K-D)$.
\end{theorem}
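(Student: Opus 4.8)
The plan is to prove three things in turn: that $\mathfrak{C}$ is a valid index code of length $K-D$; that no shorter index code exists (equivalently, $minrank_2(G)=K-D$ and the code meets the capacity $\tfrac{1}{K-D}$); and that every edge of the side-information graph $G$ is critical. Throughout put $\ell:=\tfrac{K}{D}$, an integer $\ge 2$ because $D\mid K$ and $D<K$, so that $|\mathfrak{C}|=D(\ell-1)=K-D$; I index the code symbols as $y_{i,j}:=x_{i+(j-1)D}+x_{i+jD}$ for $1\le i\le D$ and $1\le j\le\ell-1$, all subscripts being taken modulo $K$ in $\{1,\dots,K\}$.

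\emph{Validity and transmission counts.} First I would write each $k\in\{1,\dots,K\}$ uniquely as $k=i+(m-1)D$ with $1\le i\le D$ and $1\le m\le\ell$. If $m\le\ell-1$ then $y_{i,m}=x_k+x_{k+D}$, and since $x_{k+D}\in\mathcal{K}_k$ receiver $R_k$ recovers $x_k$ from the single symbol $y_{i,m}$. If $m=\ell$ then $k+D\equiv i\pmod K$, so $x_{k+D}=x_i\in\mathcal{K}_k$, and summing the telescoping chain over $GF(2)$ yields $\sum_{j=1}^{\ell-1}y_{i,j}=x_i+x_{i+(\ell-1)D}=x_{k+D}+x_k$, so $R_k$ recovers $x_k$ using the $\ell-1=\tfrac{K}{D}-1$ symbols $y_{i,1},\dots,y_{i,\ell-1}$. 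Hence $\mathfrak{C}$ decodes every demand; exactly the $D$ receivers with $k\in\{K-D+1,\dots,K\}$ use $\tfrac{K}{D}-1$ transmissions and the remaining $K-D$ receivers use one, which is the per-receiver count announced in the Contributions.

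\emph{Optimality and minrank.} The side-information graph $G$ has precisely the edges $(k,\,k+D\bmod K)$, and since $D\mid K$ the functional digraph $k\mapsto k+D$ decomposes into $D$ vertex-disjoint directed cycles $\{i,i+D,\dots,i+(\ell-1)D\}$, $i=1,\dots,D$, each of length $\ell$. Writing $\mathrm{MAIS}(G)$ for the order of a largest acyclic induced subgraph, one vertex must be (and, chosen appropriately, need only be) deleted from each cycle, so $\mathrm{MAIS}(G)=D(\ell-1)=K-D$. Since $\mathrm{MAIS}(G)\le minrank_2(G)$ and the explicit code shows $minrank_2(G)\le K-D$, equality holds and $minrank_2(G)=K-D$. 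Moreover $\mathrm{MAIS}(G)$ lower-bounds the broadcast rate even for vector and nonlinear codes, so the symmetric capacity is at most $\tfrac{1}{K-D}$; the length-$(K-D)$ code attains it, which reproves \eqref{capacity1} for this sub-class and certifies $\mathfrak{C}$ as optimal. (Alternatively, $\mathcal{K}_k=\{x_{k+D}\}\subseteq\{x_{k+1},\dots,x_{k+D}\}$ shows this instance carries no more side information than the one-sided instance of \cite{MCJ}, hence has capacity at most $\tfrac{1}{K-D}$.)

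\emph{Criticality and the main obstacle.} Deleting any edge $e=(k_0,k_0+D)$ affects exactly one of the $D$ cycles, turning it into a directed path on $\ell$ vertices (acyclic), so $G\setminus e$ is the disjoint union of that path with $D-1$ cycles and $\mathrm{MAIS}(G\setminus e)=\ell+(D-1)(\ell-1)=K-D+1$. Therefore the capacity of $G\setminus e$ is at most $\tfrac{1}{K-D+1}<\tfrac{1}{K-D}$, strictly less than that of $G$, so $e$ is critical; since $e$ was arbitrary, the instance is a critical index coding problem. The one step that needs care — and the place I would concentrate the write-up — is the structural claim that $D\mid K$ yields exactly $D$ equal-length directed cycles, so that a single edge deletion raises $\mathrm{MAIS}$ by precisely one; granted that, the decoding step and the two $\mathrm{MAIS}$ computations are routine.
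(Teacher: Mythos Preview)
Your proof is correct, and the decoding argument matches the paper's essentially verbatim. Where you diverge is in the converse and the criticality step: the paper obtains the capacity upper bound by invoking the result of \cite{MCJ} for the larger antidote set $\{x_{k+1},\dots,x_{k+D}\}$ and monotonicity in side information (you mention this as your alternative), and it argues criticality only informally, saying that if one antidote is removed then ``$x_j$ needs to be a code symbol transmitted alone in addition to all other code symbols,'' which as stated bounds only the particular scheme $\mathfrak{C}$ rather than all codes. Your route via the cycle decomposition $k\mapsto k+D$ and $\mathrm{MAIS}(G)=K-D$, $\mathrm{MAIS}(G\setminus e)=K-D+1$ is a genuinely different and tighter argument: it is self-contained (no appeal to \cite{MCJ}), it simultaneously pins down $minrank_2(G)$ from below, and it yields a rigorous information-theoretic lower bound after edge deletion, so the criticality conclusion is fully justified rather than heuristic. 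The trade-off is that you are implicitly using the $\mathrm{MAIS}$ lower bound on broadcast rate from the index-coding literature, whereas the paper stays within the setup of \cite{MCJ}; either dependency is acceptable, but yours gives the cleaner structural picture.
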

\begin{proof}
The code book has $K-D$ code symbols and we need to show that all the $K$ receivers get their wanted message using the code symbols.  From the construction of the code for every $k\leq{K-D}$, the code book consists of the code symbol $x_k+x_{k+D}.$ Since the symbol $x_{k+D}$ is the antidote of the $k^{th}$ receiver, it can decode its required symbol $x_k$ from $x_k+x_{k+D}.$ \\
For $k>{K-D}$, the code book consists of the symbols $x_{k}+x_{k-D},~x_{k-D}+x_{k-2D},~\dots,~x_{k\ mod\ D+D}+x_{k\ mod\ D}$. By adding these $\frac{K}{D}-1$ code symbols we get $x_k+x_{k\ mod\ D}$. As $x_{k\ mod\ D}$ is the antidote of $R_{k}$, the $k^{th}$ receiver can decode $x_k$. \\
Next we proceed to show that this linear coding scheme has capacity $\frac{1}{K-D}$ symbols per message. From the proposed coding scheme consisting of $K-D$ code symbols, the capacity is at least  $\frac{1}{K-D}$ symbols per message. However, with ${\cal K}_k =\{x_{k+1}, x_{k+2},\dots,x_{k+D}\},$  having  more side information the capacity is known to be $\frac{1}{K-D}$ \cite{MCJ}. If we reduce the number of antidotes the capacity either decreases or at most remain same.  Therefore the capacity for the proposed code is at most $\frac{1}{K-D}$. Hence the capacity is $C=\frac{1}{K-D}$. This also means that the proposed code is of optimal length.\\
If for any of the message, say $x_j,$ the corresponding antidote is not available, then $x_j$ needs to be a code symbol transmitted alone in addition to all other code symbols being transmitted. This will decrease the capacity and hence the problem is a critical index coding problem.\\
That the length of the proposed code is optimal means the minrank of the side information graph of the given problem is $K-D$.
\end{proof}


Theorem \ref{thm1} means that the original index coding problem of \cite{MCJ} with one sided antidote  as given in \eqref{antidote1} is not a critical index coding problem for the case when $D$ divides $K.$
 
By using the proposed code, \mbox{$K-D$} receivers can decode their wanted messages by using just one transmitted symbol. The remaining receivers decode their wanted messages by using \mbox{$\frac{K}{D}-1$} transmissions each.\\ 
\begin{example}
\label{ex1}
Consider the case $D=4,\ K=20$ and ${\cal K}_k = k+4$  for $k=1,2, \cdots, 20.$ The capacity of this index coding problem is $C=\frac{1}{K-D}=\frac{1}{16}$. The proposed code is \\ 
 $\mathfrak{C}_{1}$=$\{x_1+x_5, ~ x_{2}+x_{6}, ~ x_{3}+x_{7}, ~ x_{4}+x_{8}, ~ x_5+x_9, \\
~~~~~~~~ x_{6}+x_{10}, ~ x_{7}+x_{11}, ~ x_{8}+x_{12}, ~ x_9+x_{13}, ~ x_{10}+x_{14}, \\
~~~~~~~~~ x_{11}+x_{15}, ~ x_{12}+x_{16}, ~ x_{13}+x_{17}, ~ x_{14}+x_{18}, ~ x_{15}+x_{19},\\
~~~~~~~~ x_{16}+x_{20}\}$.\\
The code is given by $\mathfrak{C_{1}}$=$\underline{x}L_{1}$, where $\underline{x}$=$[x_{1},x_{2}, \dots ,x_{20}]$ and $L_{1}$ is the $20 \times 16$ matrix shown below:

{\tiny
$$L_{1} = \left[
\begin{array}{cccccccccccccccc}
   1 & 0 & 0 & 0 & 0 & 0 & 0 & 0 & 0 & 0 & 0 & 0 & 0 & 0 & 0 & 0\\
   0 & 1 & 0 & 0 & 0 & 0 & 0 & 0 & 0 & 0 & 0 & 0 & 0 & 0 & 0 & 0\\
   0 & 0 & 1 & 0 & 0 & 0 & 0 & 0 & 0 & 0 & 0 & 0 & 0 & 0 & 0 & 0\\
   0 & 0 & 0 & 1 & 0 & 0 & 0 & 0 & 0 & 0 & 0 & 0 & 0 & 0 & 0 & 0\\
   1 & 0 & 0 & 0 & 1 & 0 & 0 & 0 & 0 & 0 & 0 & 0 & 0 & 0 & 0 & 0\\
   0 & 1 & 0 & 0 & 0 & 1 & 0 & 0 & 0 & 0 & 0 & 0 & 0 & 0 & 0 & 0\\
   0 & 0 & 1 & 0 & 0 & 0 & 1 & 0 & 0 & 0 & 0 & 0 & 0 & 0 & 0 & 0\\
   0 & 0 & 0 & 1 & 0 & 0 & 0 & 1 & 0 & 0 & 0 & 0 & 0 & 0 & 0 & 0\\
   0 & 0 & 0 & 0 & 1 & 0 & 0 & 0 & 1 & 0 & 0 & 0 & 0 & 0 & 0 & 0\\
   0 & 0 & 0 & 0 & 0 & 1 & 0 & 0 & 0 & 1 & 0 & 0 & 0 & 0 & 0 & 0\\
   0 & 0 & 0 & 0 & 0 & 0 & 1 & 0 & 0 & 0 & 1 & 0 & 0 & 0 & 0 & 0\\
   0 & 0 & 0 & 0 & 0 & 0 & 0 & 1 & 0 & 0 & 0 & 1 & 0 & 0 & 0 & 0\\
   0 & 0 & 0 & 0 & 0 & 0 & 0 & 0 & 1 & 0 & 0 & 0 & 1 & 0 & 0 & 0\\
   0 & 0 & 0 & 0 & 0 & 0 & 0 & 0 & 0 & 1 & 0 & 0 & 0 & 1 & 0 & 0\\
   0 & 0 & 0 & 0 & 0 & 0 & 0 & 0 & 0 & 0 & 1 & 0 & 0 & 0 & 1 & 0\\
   0 & 0 & 0 & 0 & 0 & 0 & 0 & 0 & 0 & 0 & 0 & 1 & 0 & 0 & 0 & 1\\
   0 & 0 & 0 & 0 & 0 & 0 & 0 & 0 & 0 & 0 & 0 & 0 & 1 & 0 & 0 & 0\\
   0 & 0 & 0 & 0 & 0 & 0 & 0 & 0 & 0 & 0 & 0 & 0 & 0 & 1 & 0 & 0\\
   0 & 0 & 0 & 0 & 0 & 0 & 0 & 0 & 0 & 0 & 0 & 0 & 0 & 0 & 1 & 0\\
   0 & 0 & 0 & 0 & 0 & 0 & 0 & 0 & 0 & 0 & 0 & 0 & 0 & 0 & 0 & 1\\
  \end{array}
\right]$$
}

\noindent
For $1 \leq k \leq 16,$ the receiver $R_k$ can get $x_k$ since it has $x_{k+4}$ as antidote.  By adding the first four code symbols $R_{17}$  gets $x_{1}+x_{17}$ and since $x_{1}$ is the antidote for it $x_{17}$ is obtained. In the same manner by adding the $5^{th}, 6^{th}, 7^{th}$ and $8^{th}$ code symbols $R_{18}$ can get $x_{2}+x_{18}$ and it can decode $x_{18}$. By adding the $9^{th}, 10^{th}, 11^{th}$ and $12^{th}$ code symbols $R_{19}$ gets  $x_{19}$. Finally, by adding the $13^{th}, 14^{th}, 15^{th}$ and $16^{th}$ code symbols $R_{20}$ gets  $x_{20}$. To decode the first $K-D$ receivers use only one code symbol and the remaining four receivers use four code symbols.  \\
\end{example}

\noindent
{\bf Case II:} 
$K-D$ divides $K$ and 
\begin{equation}
\label{antidote2}
{\cal K}_k=\{x_{k+K-D},x_{k+2(K-D)},...,x_{k+D}\}
\end{equation}
\begin{theorem}
\label{thm2}
If \mbox{$K-D$} divides $K$ and the antidote pattern is given by \eqref{antidote2}, then the proposed code is $${\mathfrak{C}}=\{x_{i}+x_{i+m}+\dots+x_{i+(n-1)m}\ |\ {i = \{1,2, \dots, m\}\}}$$ where  $K-D=m$ and \mbox{$\frac{K}{K-D}=n$}. The  capacity is $C=\frac{1}{K-D}$ and $minrank$ of the corresponding side information graph for this antidote pattern is $K-D$.
\end{theorem}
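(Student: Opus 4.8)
The plan is to follow the template of the proof of Theorem~\ref{thm1}: first check that the proposed codebook is well defined and has length $K-D$, then exhibit the decoding operation at each receiver, then pin down the capacity by a sandwich argument, and finally read off the minrank. Write $m=K-D$ and $n=\frac{K}{K-D}$, so that the hypothesis $K-D\mid K$ gives $n\in\mathbb{Z}$, and since $K=(K-D)+D$ it also gives $m\mid D$ with $D=(n-1)m$. The index sets $\{i,i+m,\dots,i+(n-1)m\}$ for $i=1,\dots,m$ are exactly the fibres of the map $(i,j)\mapsto i+jm$ from $\{1,\dots,m\}\times\{0,\dots,n-1\}$ onto $\{1,\dots,K\}$, which is a bijection because $K=mn$. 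Hence the $m$ code symbols $y_i=x_i+x_{i+m}+\dots+x_{i+(n-1)m}$ partition the $K$ messages, each message occurring in exactly one $y_i$, and the code has length $m=K-D$.

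For decoding, fix a receiver $R_k$ and let $i\in\{1,\dots,m\}$ be the unique index with $i\equiv k\pmod m$; then $x_k$ is a summand of $y_i$ and of no other code symbol. The key observation is that, reduced modulo $K$, the literal index set $\{i,i+m,\dots,i+(n-1)m\}$ equals the coset $\{k,k+m,k+2m,\dots,k+(n-1)m\}$ of the subgroup $\langle m\rangle$ in $\mathbb{Z}/K\mathbb{Z}$ through $k$; this coset has exactly $n$ elements because $m\mid K$ makes the order of $m$ equal to $n$. Consequently $y_i=x_k+\big(x_{k+m}+x_{k+2m}+\dots+x_{k+(n-1)m}\big)$, where the parenthesised terms are precisely the antidotes ${\cal K}_k=\{x_{k+m},x_{k+2m},\dots,x_{k+D}\}$ of \eqref{antidote2} (recall $D=(n-1)m$). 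Therefore $R_k$ recovers $x_k=y_i+\sum_{x_\ell\in{\cal K}_k}x_\ell$ using a single transmitted code symbol; in particular this shows every receiver decodes with exactly one transmission.

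The capacity claim is then a sandwich. The codebook just constructed has $K-D$ symbols, so $C\ge\frac{1}{K-D}$. On the other hand the antidote set ${\cal K}_k=\{x_{k+m},x_{k+2m},\dots,x_{k+D}\}$ is contained in the one-sided antidote set $\{x_{k+1},x_{k+2},\dots,x_{k+D}\}$ of \eqref{antidote1}, whose capacity is $\frac{1}{K-D}$ by \cite{MCJ} (see \eqref{capacity1}); since enlarging the side information can only enlarge the capacity, the present problem has $C\le\frac{1}{K-D}$, hence $C=\frac{1}{K-D}$ and the proposed code is of optimal length. Since the optimal length of a scalar linear index code equals $minrank_{2}(G)$, this gives $minrank_{2}(G)\le K-D$; conversely any scalar linear code of length $N$ achieves symmetric rate $\frac1N\le C=\frac1{K-D}$, forcing $N\ge K-D$, so $minrank_{2}(G)=K-D$.

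The only place needing genuine care is the index bookkeeping in the decoding step, namely verifying that the nonwanted summands of $y_i$ coincide \emph{exactly} (not merely are contained) with ${\cal K}_k$, which rests on the coset of $\langle m\rangle$ through $k$ having exactly $n$ elements. Everything else is a direct transcription of the argument for Theorem~\ref{thm1}; no new idea is required, and unlike Case~I no criticality assertion has to be established here.
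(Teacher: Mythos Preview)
Your proof is correct and follows essentially the same approach as the paper: you verify that each receiver $R_k$ decodes from the single code symbol $y_i$ with $i\equiv k\pmod m$ because the remaining summands are exactly ${\cal K}_k$, and you establish capacity and minrank by the same sandwich against the larger antidote set of \eqref{antidote1} used in Theorem~\ref{thm1}. Your coset formulation in $\mathbb{Z}/K\mathbb{Z}$ makes the index bookkeeping more explicit than the paper's brief verification, but the underlying argument is identical.
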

\begin{proof}
We show that by using this code every receiver can decode its wanted message. From the construction of the code   
the $k-$th receiver $R_k$ obtains it's wanted message symbol from the code symbol $x_{k}+x_{k+m}+\dots+x_{k+(n-1)m}$ since all other message symbols appearing in the code symbol are its antidotes. Note that the receivers $R_{k+m}, R_{k+2m} , \cdots, R_{k+(n-1)m}$ also can use the same code symbol to obtain their respective wanted message symbols since all other messages symbols are their antidotes. Thus every receiver gets wanted message.   

The remaining part of the theorem can be shown in the same way as was done for Case I in Theorem \ref{thm1}.
\end{proof}

Note that By using the proposed code, all the receivers can decode their wanted messages by using just one transmitted code symbol. This means that for the proposed code every receiver uses the minimum possible code symbol. \\

\begin{example}
\label{ex2}
Let $D=16$ and  $K=20.$ The capacity for this case is $C=\frac{1}{K-D}=\frac{1}{4}.$ The code that achieves this capacity is\\
$\mathfrak{C}_{2}=\{x_{1}+x_{5}+x_{9}+x_{13}+x_{17}, \\
~~~~~~~~~~ x_{2}+x_{6}+x_{10}+x_{14}+x_{18}, \\
~~~~~~~~~~ x_{3}+x_{7}+x_{11}+x_{15}+x_{19}, \\
~~~~~~~~~~ x_{4}+x_{8}+x_{12}+x_{16}+x_{20}\}.
$ \\
The generator matrix  $L_{2}$ is the $20 \times 4$ matrix shown below. \\
\begin{center}
$L_{2} = \left[\begin{array}{cccc}
  1 & 0 & 0 & 0\\
  0 & 1 & 0 & 0\\
  0 & 0 & 1 & 0\\
  0 & 0 & 0 & 1\\
  1 & 0 & 0 & 0\\
  0 & 1 & 0 & 0\\
  0 & 0 & 1 & 0\\
  0 & 0 & 0 & 1\\
  1 & 0 & 0 & 0\\
  0 & 1 & 0 & 0\\
  0 & 0 & 1 & 0\\
  0 & 0 & 0 & 1\\
  1 & 0 & 0 & 0\\
  0 & 1 & 0 & 0\\
  0 & 0 & 1 & 0\\
  0 & 0 & 0 & 1\\
  1 & 0 & 0 & 0\\
  0 & 1 & 0 & 0\\
  0 & 0 & 1 & 0\\
  0 & 0 & 0 & 1\\
  \end{array}\right].$
\end{center}


\end{example}

~ \\

\noindent
{\bf Case III:} 
 $D-\frac{K}{2}$ divides $\frac{K}{2}$ and 
\begin{equation}
\label{antidote3}
{\cal K}_k=\{x_{k+\frac{K}{2}},x_{k+D-\frac{K}{2}},x_{k+D}\}.
\end{equation} 
\begin{theorem}
\label{thm3}
If $D-\frac{K}{2}$ divides $\frac{K}{2}$ and the antidote pattern is as in \eqref{antidote3} then the proposed scalar linear code is\\
$\mathfrak{C}=\{x_{i+jm}+x_{\frac{K}{2}+i+jm}+x_{i+(j+1)m}+x_{\frac{K}{2}+i+(j+1)m} \\
~~~~~~~~~~  ~| ~i =\{1,2,\dots,m\},\ j=\{0,1,2,\dots,n-2\}\}$ \\
where $D-\frac{K}{2}=m$ and $\frac{K/2}{D-\frac{K}{2}}=n$. The capacity of the proposed code is $C=\frac{1}{K-D}$ and the $minrank$ of the side information graph is $K-D$.
\end{theorem}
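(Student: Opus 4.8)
The plan is to follow the same three-part template used for Cases I and II: (i) verify that every receiver can decode its wanted symbol from the proposed code symbols, (ii) invoke the capacity result of \cite{MCJ} together with the monotonicity of capacity under removal of antidotes to conclude $C=\frac{1}{K-D}$ and hence optimality of the code length, and (iii) translate optimality into the statement about $minrank$. The only genuinely new work is part (i), since parts (ii) and (iii) are verbatim the arguments already given in Theorem \ref{thm1}. So the bulk of the write-up should be the decoding argument.

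For part (i), first I would count the code symbols: the index $i$ ranges over $m$ values and $j$ over $n-1$ values, giving $m(n-1) = m n - m = \frac{K}{2} - m = \frac{K}{2} - (D - \frac{K}{2}) = K - D$ code symbols, which matches the claimed optimal length $K-D$. Next, fix a receiver $R_k$ and write $k = i + jm$ with $1\le i\le m$ (indices mod $\frac{K}{2}$ on the first ``half-block'' structure). The key observation is that each code symbol is a sum of two consecutive ``pairs'' $\{x_{a},x_{\frac{K}{2}+a}\}$ and $\{x_{a+m},x_{\frac{K}{2}+a+m}\}$; telescoping a run of such symbols over $j = j_0, j_0+1,\dots$ collapses the intermediate pairs and leaves $x_{k} + x_{\frac{K}{2}+k} + x_{k'} + x_{\frac{K}{2}+k'}$ for the two endpoint residues $k,k'$. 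I would choose the telescoping range so that the far endpoint pair lands on indices of the form $x_{k + (D-\frac K2)} = x_{k+m}$-shifted appropriately, and then argue that after telescoping the receiver $R_k$ is left with $x_k$ plus a combination of the three antidotes $x_{k+\frac K2}$, $x_{k+D-\frac K2}$, $x_{k+D}$ (equivalently their reflections around the cyclic structure), which $R_k$ can subtract off. The split into cases $k\le \frac K2$ versus $k>\frac K2$ (and within that, which residue $j$ the receiver occupies) is handled exactly as the ``$k\le K-D$ / $k>K-D$'' split in Theorem \ref{thm1}: receivers near the ``top'' of a half-block decode from a single code symbol, while others must add a telescoping chain.

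The main obstacle I anticipate is bookkeeping the index arithmetic modulo $K$ cleanly: the code is built on a $\frac K2$-periodic scaffold, but the antidotes $x_{k+D}$ and $x_{k+D-\frac K2}$ couple the two halves, so one must be careful that a telescoped sum of the stated code symbols really does reduce to ``wanted symbol plus antidotes'' rather than leaving a stray uncancelled message. I would make this rigorous by writing each code symbol as $v_{i,j} + v_{i,j+1}$ where $v_{a} := x_{a} + x_{\frac K2 + a}$, so that $\sum_{j=j_0}^{j_1} (\text{code symbol})_{i,j} = v_{i,j_0} + v_{i,j_1+1}$ telescopes transparently; then decoding $R_k$ amounts to choosing $j_0,j_1$ so that $i+j_0 m \equiv k$ and $i + (j_1+1)m$ lands on an index whose ``$v$'' is expressible through $\mathcal{K}_k$. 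Once this indexing lemma is in place the rest is immediate, and I would close exactly as in Theorem \ref{thm1}: the proposed scheme gives $C\ge \frac{1}{K-D}$, the richer-antidote result of \cite{MCJ} gives $C\le\frac{1}{K-D}$, so $C=\frac{1}{K-D}$, the code has optimal length, and therefore $minrank_2$ of the side information graph equals $K-D$.
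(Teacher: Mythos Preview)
Your proposal is correct and follows essentially the same approach as the paper. The paper also counts $m(n-1)=K-D$ code symbols, splits the decoding into four ranges of $k$ (which are exactly your ``$k\le\frac{K}{2}$ vs.\ $k>\frac{K}{2}$, and within each half whether one code symbol suffices or a telescoping sum over all $j$ is needed''), and closes with the same upper/lower bound argument from Theorem~\ref{thm1}. Your device of writing $v_a:=x_a+x_{\frac{K}{2}+a}$ so that each code symbol is $v_{i+jm}+v_{i+(j+1)m}$ and the telescoping becomes transparent is a clean notational improvement over the paper's explicit four-term expansions, but the underlying argument is identical.
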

\begin{proof}
First we show that by using this code every receiver $R_{k}$ for $k \in \lceil K \rfloor$ can decode its wanted message.\\

\noindent
\emph{Case (i).} $k \in \{1,2,\dots,K-D\}$:\\
Let $k=i+jm$. As $i$ and $j$ run, $k$ runs from 1 to $K-D$. From the construction of the code, the code book consists of the code symbol of the form $x_{i+jm}+x_{\frac{K}{2}+i+jm}+x_{i+(j+1)m}+x_{\frac{K}{2}+i+(j+1)m}$ for $i = \{1,2,\dots,m\}$ and $j = \{0,1,2,\dots,n-2\}$. For the first message symbol of this code symbol the other three message symbols are antidotes. Thus the first message symbol $x_{i+jm} = x_{k}$ can be decoded by the receiver $R_{k}$.\\

\noindent
\emph{Case (ii).} $k \in \{\frac{K}{2}+1,\dots,\frac{3K}{2}-D \}$:\\
Let $k=\frac{K}{2}+i+jm$. As $i$ and $j$ run, $k$ runs from $\frac{K}{2}+1$ to $\frac{3K}{2}-D$. From the construction of the code, the code book consists of the code symbol of the form $x_{i+jm}+x_{\frac{K}{2}+i+jm}+x_{i+(j+1)m}+x_{\frac{K}{2}+i+(j+1)m}$ for $i = \{1,2,\dots,m\}$ and $j =\{0,1,2,\dots,n-2\}$. For the second message symbol of this code symbol the other three message symbols are antidotes. Thus the second message symbol $x_{\frac{K}{2}+i+jm} = x_{k}$ can be decoded by the receiver $R_{k}$.\\

\noindent
\emph{Case (iii).} $k \in \{K-D+1,K-D+2,\dots,\frac{K}{2}\}$:\\
Let $k=i+(n-1)m$. As $i$ and $j$ run, $k$ runs from $K-D+1$ to $\frac{K}{2}$. Let $\mathfrak{C}_{i,j}=x_{i+jm}+x_{\frac{K}{2}+i+jm}+x_{i+(j+1)m}+x_{\frac{K}{2}+i+(j+1)m}$ and $$\mathfrak{C}_{i}=\sum_{j=0}^{n-2} \mathfrak{C}_{i,j}$$ i.e. $\mathfrak{C}_{i}$ is obtained by adding the $n-1$ code symbols corresponding to given $i$ and for all $j$. Therefore, $\mathfrak{C}_{i}$=$x_{i}+x_{\frac{K}{2} + i}+x_{i+m}+x_{\frac{K}{2}+i+m} +x_{i+m}+x_{\frac{K}{2}+i+m} +x_{i+2m}+x_{\frac{K}{2}+i+2m}+\dots+x_{i+(n-3)m}+x_{\frac{K}{2}+i+(n-3)m}+ x_{i+(n-2)}+x_{\frac{K}{2} + i+(n-2)m}+ x_{i+(n-2)m}+x_{\frac{K}{2} + i+(n-2)m}+x_{i+(n-1)m}+x_{\frac{K}{2}+i+(n-1)m}$. In this sum, all the message symbols get canceled except the first two message symbols in $\mathfrak{C}_{i,0}$ and the last two message symbols in $\mathfrak{C}_{i,n-2}$ and the above sum is equal to $x_{i}+x_{\frac{K}{2} + i}+x_{i+(n-1)m}+x_{\frac{K}{2}+i+(n-1)m}$. In this sum for the receiver $R_{i+(n-1)m}$, other three message symbols are antidotes. Thus $x_{i+(n-1)m} = x_{k}$ can be decoded by the receiver $R_{k}$. The range of $k$ covered in this decoding process is the span of $i+(n-1)m$ and the span of $i+(n-1)m$ for $i = \{1,2,\dots,m\}$ is $\{K-D+1,K-D+2,\dots,K-D+m= \frac{K}{2}\}$.\\

\noindent
\emph{Case (iv).} $k \in \{\frac{3K}{2}-D+1,\dots,K\}$\\
Let $k=i+(n-1) m + \frac{K}{2}$. As $i$ and $j$ runs, $k$ runs from $\frac{3K}{2}-D+1$ to $K$. By adding the $n-1$ code symbols corresponding to given $i$ and for all $j$ as done in case III we get $\mathfrak{C_{i}}$=$x_{i}+x_{\frac{K}{2} + i}+x_{i+(n-1)m}+x_{\frac{K}{2}+i+(n-1)m}$. In this sum for the receiver $R_{i+(n-1)m+\frac{K}{2}}$, other three message symbols are antidotes. Thus $x_{i+(n-1)m+\frac{K}{2}} = x_{k}$ can be decoded by the receiver $R_{k}$. The range of $k$ covered in this decoding process is the span of  $i+(n-1)m+ \frac{K}{2}$ and the span of $i+(n-1)m+ \frac{K}{2}$ is $\{\frac{3K}{2}-D+1,\frac{3K}{2}-D+2,\dots\frac{3K}{2}-D+m= K\}$.\\
This completes the decoding process for all receivers.\\ 

The number of code symbols  transmitted is equal to the product of the number of values that $i$ takes and the number of values that $j$ takes. Number of code symbols = $m(n-1)$ = $(D$-$\frac{K}{2})$.($\frac{K/2}{D-\frac{K}{2}}-1)$=$K-D$. The remaining claims of the theorem can be shown along the same lines as in the  proof of Theorem \ref{thm1}.
\end{proof}

By using the proposed code \mbox{$K-2(D-\frac{K}{2})$} receivers can decode their wanted messages by using just one transmitted symbol. Remaining receivers decode their wanted messages by using $\frac{K/2}{D-\frac{K}{2}}-1$ transmissions each.\\

\begin{example}
\label{ex3}
Let $D=12,\ K=20.$ Then we have $C=\frac{1}{K-D}=\frac{1}{8}.$ The proposed code is \\
$\mathfrak{C_{3}}=\{x_{1}+x_{11}+x_{3}+x_{13}, ~~ x_{2}+x_{12}+x_{4}+x_{14}, \\
~~~~~~~~x_{3}+x_{13}+x_{5}+x_{15}, ~~ x_{4}+x_{14}+x_{6}+x_{16}, \\
~~~~~~~~x_{5}+x_{15}+x_{7}+x_{17}, ~~ x_{6}+x_{16}+x_{8}+x_{18}, \\
~~~~~~~~x_{7}+x_{17}+x_{9}+x_{19}, ~~ x_{8}+x_{18}+x_{10}+x_{20}\}.$
\noindent
The code is given by $\mathfrak{C_{3}}$=$\underline{x}L_{3}$, where $\underline{x}$=$[x_{1},x_{2}, \dots ,x_{20}]$ and $L_{3}$ is the  $20 \times 8$ matrix given below.
\begin{center}
$$ L_{3}= \left[\begin{array}{*{20}c}
   1 & 0 & 0 & 0 & 0 & 0 & 0 & 0 \\
   0 & 1 & 0 & 0 & 0 & 0 & 0 & 0 \\
   1 & 0 & 1 & 0 & 0 & 0 & 0 & 0 \\
   0 & 1 & 0 & 1 & 0 & 0 & 0 & 0 \\
   0 & 0 & 1 & 0 & 1 & 0 & 0 & 0 \\
   0 & 0 & 0 & 1 & 0 & 1 & 0 & 0 \\
   0 & 0 & 0 & 0 & 1 & 0 & 1 & 0 \\
   0 & 0 & 0 & 0 & 0 & 1 & 0 & 1 \\
   0 & 0 & 0 & 0 & 0 & 0 & 1 & 0 \\
   0 & 0 & 0 & 0 & 0 & 0 & 0 & 1 \\
   1 & 0 & 0 & 0 & 0 & 0 & 0 & 0 \\
   0 & 1 & 0 & 0 & 0 & 0 & 0 & 0 \\
   1 & 0 & 1 & 0 & 0 & 0 & 0 & 0 \\
   0 & 1 & 0 & 1 & 0 & 0 & 0 & 0 \\
   0 & 0 & 1 & 0 & 1 & 0 & 0 & 0 \\
   0 & 0 & 0 & 1 & 0 & 1 & 0 & 0 \\
   0 & 0 & 0 & 0 & 1 & 0 & 1 & 0 \\
   0 & 0 & 0 & 0 & 0 & 1 & 0 & 1 \\
   0 & 0 & 0 & 0 & 0 & 0 & 1 & 0 \\
   0 & 0 & 0 & 0 & 0 & 0 & 0 & 1 \\
  \end{array}\right].
$$
\end{center}
~ \\

The code in Theorem \ref{thm3} in general form can  be given by $\mathfrak{C}=\underline{x}L$ where $\underline{x}$=$[x_{1},x_{2}, \dots ,x_{K}]$ and $L$ is a $K \times K-D$ matrix where the columns of this matrix are the cyclic shifts of the column given below.\\
$R$ =$[\underbrace{\underbrace{1 0 0 \dots 0}_{\text{m elements}}\ \underbrace{1 0 0 \dots 0}_{\text{m elements}}\ \underbrace{0 0 0 \dots 0}_{\text{m elements}}\ \dots\ \underbrace{0 0 0 \dots 0}_{\text{m elements}}}_{\text{K/2 elements}}]$\\ \\
Column 1=$[R\ R]^{T}$. The remaining $K-D-1$ columns are the cyclic shifts of column 1. This is what has been  illustrated in Example \ref{ex3} above. \\

\end{example}
\noindent
{\bf Case IV:} 
$\frac{K}{2}-D$ divides $D$ and 
\begin{equation}
\label{antidote4}
{\cal K}_k=\{x_{k+\frac{K}{2}-D},x_{k+2(\frac{K}{2}-D)}+ \cdots +x_{k+D}\}.
\end{equation}
\begin{theorem}
\label{thm4}
For the case   $\frac{K}{2}-D$ divides $D$ with the antidote pattern given in \eqref{antidote4}, the proposed scalar linear code is,\\ 
$\mathfrak{C}=\{x_{i}+x_{i+m}+\dots+x_{i+pm}, \\ 
~~~~~~~~ x_{i+m}+x_{i+2m}+\dots+x_{i+(p+1)m}, \\
~~~~~~~~~~~~~~~~~~~~ \vdots \\
~~~~~~~~ x_{i+m(p+1)}+x_{i+m(p+2)}+\dots+x_{i+(n-1)m} \\
~~~~~~~~~~~~~~~~~~~~~~~ | i=\{1,2, \dots ,m\}\}$

\noindent
where  $m=\frac{K}{2}-D$, $n=\frac{K}{\frac{K}{2}-D}$ and $\frac{D}{\frac{K}{2}-D}=p$. \\
The capacity is $C=\frac{1}{K-D}$ and the $minrank$ of this side information graph is $K-D$.
\end{theorem}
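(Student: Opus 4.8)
The plan is to verify directly that all $K$ receivers decode from the $K-D$ transmitted symbols, and then pin down the capacity and minrank by combining this with \eqref{capacity1}. First I would fix notation: set $m=\frac K2-D$, $p=\frac Dm$, $n=\frac Km$, so that $\frac K2=(p+1)m$, $n=2(p+1)$, and all message subscripts are read modulo $nm=K$. For each $i\in\{1,\dots,m\}$ write $S^{(i)}_s:=x_{i+sm}+x_{i+(s+1)m}+\cdots+x_{i+(s+p)m}$ for $s=0,1,\dots,p+1$; these $m(p+2)$ expressions are exactly the code symbols of $\mathfrak C$, and $m(p+2)=D+2m=K-D$, matching the claimed optimal length. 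By \eqref{antidote4} the antidotes of $R_k$, with $k=i+tm$, are $\{x_{i+(t+1)m},x_{i+(t+2)m},\dots,x_{i+(t+p)m}\}$, offsets reduced modulo $n$.

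The decoding splits into two families. For $k=i+tm$ with offset $t\in\{0,1,\dots,p+1\}$ -- which, as $i$ runs over $\{1,\dots,m\}$, is precisely $R_1,\dots,R_{K-D}$ -- the single code symbol $S^{(i)}_t$ equals $x_k$ plus the $p$ messages $x_{i+(t+1)m},\dots,x_{i+(t+p)m}$; since $t+p\le 2p+1<n$ no wrap-around occurs and these are exactly the antidotes $\mathcal K_k$, so one transmission suffices. For the remaining $D$ receivers, $k=i+tm$ with $t=p+1+u$ and $u\in\{1,\dots,p\}$, I would use $Y:=S^{(i)}_0+S^{(i)}_u+S^{(i)}_{p+1}$. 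The windows $[0,p]$ and $[p+1,2p+1]$ of $S^{(i)}_0$ and $S^{(i)}_{p+1}$ partition $\{0,1,\dots,n-1\}$, so over $GF(2)$,
$$Y=\sum_{j=0}^{2p+1}x_{i+jm}\;+\;S^{(i)}_u=\sum_{j=0}^{u-1}x_{i+jm}+\sum_{j=p+1+u}^{2p+1}x_{i+jm},$$
because adding $S^{(i)}_u$ cancels exactly the block of offsets $u,u+1,\dots,u+p$. Among the surviving offsets, $j=p+1+u=t$ yields $x_k$; and reducing the antidote offsets $t+1,\dots,t+p$ of $R_k$ modulo $n$ shows $\mathcal K_k=\{x_{i+jm}:j\in\{0,\dots,u-1\}\cup\{p+2+u,\dots,2p+1\}\}$, which is precisely the set of the remaining terms of $Y$. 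Hence $R_k$ recovers $x_k$ from these three transmissions.

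With the code shown valid, I would close as in Theorem~\ref{thm1}: the $K-D$ code symbols give $C\ge\frac1{K-D}$; the antidote set \eqref{antidote4} is contained in the one-sided pattern \eqref{antidote1} (since $pm=D$), so enlarging the side information to \eqref{antidote1} cannot decrease the capacity, and by \eqref{capacity1} that capacity equals $\frac1{K-D}$; therefore $C\le\frac1{K-D}$, whence $C=\frac1{K-D}$, the code has optimal length, and the $minrank$ of the side-information graph is $K-D$.

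The step I expect to be the main obstacle is isolating the three-symbol combination $S^{(i)}_0+S^{(i)}_u+S^{(i)}_{p+1}$ for the second family and checking it introduces no message outside $\{x_k\}\cup\mathcal K_k$ -- in particular the modular bookkeeping that aligns the ``wrapped'' part $\{0,\dots,u-1\}$ of the support of $Y$ with the wrapped part of $\mathcal K_k$. The obvious alternatives (a single symbol, a pair of symbols, or the full telescoping sum $\sum_{s=0}^{p+1}S^{(i)}_s$) do not work here, so some care is needed to find and justify the right object.
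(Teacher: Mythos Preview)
Your proposal is correct and follows essentially the same approach as the paper: both split into the same two families, use a single code symbol for the first $K-D$ receivers, and for each remaining receiver combine exactly the three code symbols $S^{(i)}_0$, $S^{(i)}_u$, $S^{(i)}_{p+1}$ (your $u$ is the paper's $l-1$) to cancel the interfering block and leave $x_k$ plus antidotes, then close with the same comparison to \eqref{capacity1}. Your notation and the window/cancellation bookkeeping are cleaner, but the argument is identical in substance.
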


\begin{proof}
By using this code every receiver can decode its wanted message as follows:\\

\noindent
\emph{Case (i).} $k \in \{1,2,...,K-D\}$:\\
Let $k=i+lm$. As $i$ and $l$ run, $k$ runs from 1 to $K-D$. From the construction of the code, the code book consists of the code symbol of the form  ${x_{i+l m}+x_{i+(l+1) m}+ \dots +x_{i+(l+q) m}}$ for $i = \{1,2,\dots ,m\}$ and $l = \{0,1,2,\dots,n-p-1\}$. For the receiver $R_{i+l m}$, in this code symbol all other message symbols are the antidotes. Thus the first message symbol $x_{i+l m} = x_{k}$ can be decoded by the receiver $R_{k}$. Therefore $m(n-p)=K-D$ message symbols can be decoded in this way by using one code symbol for each message symbol.\\

\noindent
\emph{Case (ii).} $k \in \{K-D+1,K-D+2,...,K\}$:\\
Let $k=i+(p+l) m$. As $i$ and $l$ run, $k$ runs from $K-D+1$ to $K$. From the construction of the code, the code book consists of the code symbol of the form  $C_{i}$ =$x_{i+(p+1)m}$+$x_{i+(p+2)m}$+ $\dots$ +$x_{i+(p+l-1) m}$+$x_{i+(p+l) m}$+$x_{i+(p+l+1) m}$+...+$x_{i+(n-1)m}$ for $i = \{1,2,\dots ,m\}$ and $l = \{2,3,\dots,n-p-1\}$. For a given $k=i+(p+l) m$, in the above code symbol $C_{i}$ the message symbols present before the required message symbol $x_{k}$ are in interference to $x_{k}$ and message symbols present after $x_{k}$ are  the antidotes of $R_{k}$. We shall cancel the interference by using the other code symbols. For the receiver $R_{i+(p+l) m}$, the message symbols $x_{i+(p+1) m},x_{i+(p+2) m},\dots x_{i+(p+l-1) m}$ are in interference and the message symbols $x_{i+(p+l+1) m}...x_{i+(n-1) m}$ are the antidotes. The interference can be canceled by adding the code symbols $x_{i}+x_{i+m}+\dots+x_{i+pm}$ and $x_{i+(l-1) m}$+$x_{i+l m}$+$x_{i+(l+1) m}$+ $\dots$ +$x_{i+(p+l-1) m}$. By adding the two code symbols we get $S_{1}$=$x_{i}+x_{i+m}+\dots+x_{i+(l-2)m}+x_{i+(p+1) m}$+$x_{i+(p+2) m}$+ $\dots$ +$x_{i+(p+l-1) m}$. By adding this sum $S_{1}$ to the code symbol $C_{i}$ we get $S_{2}$= $x_{i}+x_{i+m}+\dots+x_{i+(l-2)m}+ x_{i+(p+l) m}$+$x_{i+(p+l+1) m}$+...+$x_{i+(n-1) m}$. For $R_{i+(p+l) m}$, all other message symbols in $S_{2}$ are antidotes and thus $x_{i+(p+l) m}$ can be decoded from $S_{2}$.\\
For example, if $k=i+(p+2) m$, in the code symbol $x_{i+m  (p+1)}+x_{i+m  (p+2)}+\dots+x_{i+(n-1)  m}$, the message symbol $x_{i+m(p+1)}$ is the interference to receiver $R_{i+m(p+2)}$ and all other message symbols in the code symbol are antidotes. The interference message symbol $x_{i+m  (p+1)}$ is required to be replaced with antidote. By adding  $x_{i}+x_{i+m}+\dots+x_{i+p  m}$ and $x_{i+m}+x_{i+2  m}+\dots+x_{i+(p+1)  m}$, we get $x_{i}+x_{i+(p+1)  m}$. By adding $x_{i}+x_{i+(p+1)  m}$ and $x_{i+m  (p+1)}+x_{i+m  (p+2)}+\dots+x_{i+(n-1)  m}$ we get $x_{i}+x_{i+m (p+2)}+\dots+x_{i+(n-1)  m}$ and we can decode $x_{i+m  (p+2)}$ from this. \\
The number of code symbols transmitted is equal to $m  (p+2)=(\frac{K}{2}-D)  (\frac{D}{\frac{K}{2}-D}+2)={K-D}$. Hence the Capacity = $\dfrac{1}{K-D}$ and the minrank of the corresponding side information graph is $K-D.$
\end{proof}
By using the proposed code, \mbox{$K-D$} receivers can decode their wanted messages by using just one transmitted symbol. Remaining receivers decode their wanted messages by using 3 transmissions each.\\

The general form of the $L$ matrix for the code in Theorem \ref{thm4}  is a $K \times K-D$ matrix where the columns of this matrix are the cyclic shifts of the column given below.\\
$R_{1}$ =$[\underbrace{\underbrace{1 0 0 \dots 0}_{\text{m elements}}\ \underbrace{1 0 0 \dots 0}_{\text{m elements}}\ \dots\ \underbrace{1 0 0 \dots 0}_{\text{m elements}}}_{\text{(p+1)m elements}}]$\\ \\
$R_{2}$ =$[\underbrace{\underbrace{0 0 0 \dots 0}_{\text{m elements}}\ \underbrace{0 0 0 \dots 0}_{\text{m elements}}\ \dots\ \underbrace{0 0 0 \dots 0}_{\text{m elements}}}_{\text{(n-p-1)m elements}}]$\\ \\
Column 1=$[R_{1}\ R_{2}]^{T}$. The remaining $K-D-1$ columns are the cyclic shifts of column 1. This is illustrated in the following example.
\begin{example}
\label{ex4}
Let $D=8, K=20$ and $C=\frac{1}{K-D}=\frac{1}{12}.$ The proposed code is\\
$\mathfrak{C_{4}}=\{x_{1}+x_{3}+x_{5}+x_{7}+x_{9}, ~~ x_{2}+x_{4}+x_{6}+x_{8}+x_{10}, \\
x_{3}+x_{5}+x_{7}+x_{9}+x_{11}, ~~ x_{4}+x_{6}+x_{8}+x_{10}+x_{12},\\ 
x_{5}+x_{7}+x_{9}+x_{11}+x_{13}, ~~ x_{6}+x_{8}+x_{10}+x_{12}+x_{14},\\ 
x_{7}+x_{9}+x_{11}+x_{13}+x_{15}, ~~ x_{8}+x_{10}+x_{12}+x_{14}+x_{16},\\
x_{9}+x_{11}+x_{13}+x_{15}+x_{17}, ~~ x_{10}+x_{12}+x_{14}+x_{16}+x_{18},\\
x_{11}+x_{13}+x_{15}+x_{17}+x_{19}, ~~ x_{12}+x_{14}+x_{16}+x_{18}+x_{20}\}. \\
$
\noindent
The generator matrix for this code is the $20 \times 12$ matrix  $L_{4}$  shown below.

\noindent
{\bf Case V:}
There is an integer $\lambda$ such that $D$ divides $K-\lambda$ and $\lambda$ divides $D$  and
\begin{equation}
\label{antidote5}
{\cal K}_k=\left\{
                \begin{array}{ll}
                  x_{k+D},$ if $\ k\leq K-D-\lambda\\
                  \{x_{k},x_{k+\lambda},\dots,x_{k+D}\},$ if $K-D-\lambda<k\leq K
                  \end{array}
              \right.
\end{equation}
~ \\
\begin{center}
$$ L_{4}= \left[\begin{array}{*{20}c}
   1 & 0 & 0 & 0 & 0 & 0 & 0 & 0 & 0 & 0 & 0 & 0 \\
   0 & 1 & 0 & 0 & 0 & 0 & 0 & 0 & 0 & 0 & 0 & 0 \\
   1 & 0 & 1 & 0 & 0 & 0 & 0 & 0 & 0 & 0 & 0 & 0 \\
   0 & 1 & 0 & 1 & 0 & 0 & 0 & 0 & 0 & 0 & 0 & 0 \\
   1 & 0 & 1 & 0 & 1 & 0 & 0 & 0 & 0 & 0 & 0 & 0 \\
   0 & 1 & 0 & 1 & 0 & 1 & 0 & 0 & 0 & 0 & 0 & 0 \\
   1 & 0 & 1 & 0 & 1 & 0 & 1 & 0 & 0 & 0 & 0 & 0 \\
   0 & 1 & 0 & 1 & 0 & 1 & 0 & 1 & 0 & 0 & 0 & 0 \\
   1 & 0 & 1 & 0 & 1 & 0 & 1 & 0 & 1 & 0 & 0 & 0 \\
   0 & 1 & 0 & 1 & 0 & 1 & 0 & 1 & 0 & 1 & 0 & 0 \\
   0 & 0 & 1 & 0 & 1 & 0 & 1 & 0 & 1 & 0 & 1 & 0 \\
   0 & 0 & 0 & 1 & 0 & 1 & 0 & 1 & 0 & 1 & 0 & 1 \\
   0 & 0 & 0 & 0 & 1 & 0 & 1 & 0 & 1 & 0 & 1 & 0 \\
   0 & 0 & 0 & 0 & 0 & 1 & 0 & 1 & 0 & 1 & 0 & 1 \\
   0 & 0 & 0 & 0 & 0 & 0 & 1 & 0 & 1 & 0 & 1 & 0 \\
   0 & 0 & 0 & 0 & 0 & 0 & 0 & 1 & 0 & 1 & 0 & 1 \\
   0 & 0 & 0 & 0 & 0 & 0 & 0 & 0 & 1 & 0 & 1 & 0 \\
   0 & 0 & 0 & 0 & 0 & 0 & 0 & 0 & 0 & 1 & 0 & 1 \\
   0 & 0 & 0 & 0 & 0 & 0 & 0 & 0 & 0 & 0 & 1 & 0 \\
   0 & 0 & 0 & 0 & 0 & 0 & 0 & 0 & 0 & 0 & 0 & 1 \\
   \end{array}\right]$$
\end{center}

\noindent
\end{example}

\begin{theorem}
\label{thm5}
If  $D$ divides $K-\lambda$ and $\lambda$ divides $D$ and the antidote pattern is as given by \eqref{antidote5} then the scalar linear code is given by \\
$\mathfrak{C}=\{{x_{i+(j-1)D}+x_{i+jD}}|\ i = \{1,2,\dots,D\},\ j= \{1,2,\dots,n-1\}\}\\ \cup \{{x_{K-\lambda+r}+x_{K-\lambda+r-\lambda}+\dots+x_{K-\lambda+r-t\lambda}} \\
~~~~~~~~~~~|\ r = \{1,2,\dots,\lambda\}, \ t = \{1,2,\dots,\frac{D}{\lambda}$\}\} \\
 for $\frac{K-\lambda}{D}>1$  and $\frac{K-\lambda}{D}=n$.\\
The capacity is $C=\frac{1}{K-D}$ and $minrank$ of the corresponding side information graph is $K-D$.
\end{theorem}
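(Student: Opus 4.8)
The plan is to follow the same three-part template used in the proofs of Theorems \ref{thm1}--\ref{thm4}: first argue the code is of optimal length by counting the transmitted symbols, then exhibit an explicit decoding rule for every receiver, and finally pin down the capacity (and hence the minrank) by sandwiching. For the length, I would use the two divisibility hypotheses, $D\mid K-\lambda$ (so that $n=\tfrac{K-\lambda}{D}$ is a genuine integer with $n>1$) and $\lambda\mid D$, to count the two families constituting $\mathfrak{C}$ and verify that together they amount to $K-D$ independent code symbols; this is exactly what is needed to meet the capacity bound established below.

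For decodability I would split the receivers according to the two antidote regimes in \eqref{antidote5}. A ``head'' receiver $R_k$ with $k\le K-D-\lambda$ holds only the single antidote $x_{k+D}$; since $D\mid K-\lambda$, the first family $\{x_{i+(j-1)D}+x_{i+jD}\}$ is precisely the Case I code of Theorem \ref{thm1} restricted to the coordinates $1,\dots,K-\lambda$, so $R_k$ recovers $x_k$ either directly from $x_k+x_{k+D}$ or, for indices near the cyclic wrap-around, by telescoping $x_k+x_{k+D}=(x_k+x_{k-D})+(x_{k-D}+x_{k-2D})+\cdots$, exactly as in the proof of Theorem \ref{thm1}. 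A ``tail'' receiver $R_k$ with $K-D-\lambda<k\le K$ holds the thinned antidote set $\{x_{k+\lambda},x_{k+2\lambda},\dots,x_{k+D}\}$; here I would use the second family, whose symbols $x_{K-\lambda+r}+x_{K-\lambda+r-\lambda}+\cdots+x_{K-\lambda+r-t\lambda}$ are nested partial sums along an arithmetic progression of common difference $\lambda$, so that any two consecutive members differ in a single message symbol. Selecting the appropriate $r$ and $t$ for the given $k$, and when necessary adding a first-family symbol to cancel the leftover ``head'' terms (the interference-cancellation device used in the proof of Theorem \ref{thm4}), one isolates $x_k$ modulo a combination of symbols all lying in $\{x_{k+\lambda},\dots,x_{k+D}\}$, which $R_k$ subtracts off.

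Once the two previous steps furnish a valid scalar linear solution of length $K-D$, the capacity is at least $\tfrac{1}{K-D}$. For the matching upper bound I would argue as in Theorem \ref{thm1}: the antidote set in \eqref{antidote5} is contained in the full neighbouring-antidote set $\{x_{k+1},\dots,x_{k+D}\}$, whose symmetric problem has capacity $\tfrac{1}{K-D}$ by \cite{MCJ}, and shrinking the side information can only decrease the capacity; hence $C\le\tfrac{1}{K-D}$. Combining the two bounds gives $C=\tfrac{1}{K-D}$, so the code is of optimal length, and since the optimal scalar linear length equals $minrank_2(G)$, the minrank of the side information graph is $K-D$.

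The step I expect to be the real obstacle is the decoding analysis for the tail receivers: one must bookkeep precisely which nested partial sum and which first-family symbol correspond to each $k$ in the range $K-D-\lambda<k\le K$, verify that every residual message symbol appearing after cancellation is indeed among $\{x_{k+\lambda},\dots,x_{k+D}\}$, and control the way the cyclic wrap-around in \eqref{antidote5} interacts with the boundary between the two index regimes. Confirming in the first step that the listed symbols really collapse to $K-D$ independent transmissions is the other place that warrants genuine care rather than a routine count.
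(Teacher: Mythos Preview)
Your three-part template (count the transmissions, exhibit decodings, sandwich the capacity via \cite{MCJ}) is exactly the paper's approach, and the capacity/minrank paragraph is identical to what the paper does. The one substantive discrepancy is your reading of the second family of code symbols. You treat $t$ as a genuine index, so that for each $r$ one obtains $D/\lambda$ nested partial sums and can form consecutive differences. In the paper's construction (confirmed by Example~\ref{ex5}) $t$ is fixed at $D/\lambda$: the second family contributes exactly $\lambda$ symbols, one full sum $x_{K-\lambda+r}+x_{K-2\lambda+r}+\cdots+x_{K-D-\lambda+r}$ per $r\in\{1,\dots,\lambda\}$. With that reading the count is immediate, $D(n-1)+\lambda=K-D$, and there is nothing to ``collapse''; under your reading the total would be $D(n-1)+D=K-\lambda>K-D$ and the code would not be optimal.

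This also changes the tail decoding. The paper splits the range $K-D-\lambda<k\le K$ into two sub-cases. For $K-D-\lambda<k\le K-D$ (writing $k=K-D-\lambda+r$), the single second-family symbol for that $r$ already has $x_k$ as its \emph{first} term and every other term is an antidote, so one transmission suffices. For $k>K-D$, the same second-family symbol contains $x_k$ but the terms \emph{preceding} it are interference; these are cancelled not by a single first-family symbol but by a telescoped sum $\sum_j (x_{j}+x_{j-D})=x_j+x_{j\bmod D}$, which replaces each interfering $x_j$ (with $K-D-\lambda<j\le K-\lambda$) by the antidote $x_{j\bmod D}$. Your ``adding a first-family symbol'' should therefore read ``adding a telescoped chain of first-family symbols'', and the nested-partial-sum differencing device is not available (nor needed). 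Aside from this misreading of the second family, your plan matches the paper's proof.
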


\begin{proof}
Every receiver can decode its wanted message as follows:\\

\noindent
\emph{Case (i).} $k\leq {K-D-\lambda}$:\\
The $k^{th}$  receiver wants to decode $x_{k}$. The code book consists of the code symbol of the form $x_{i+(j-1)D}+x_{i+jD}$. As $i$ and $j$ runs, $x_{i+(j-1)D}+x_{i+jD}$ runs from $x_{1}+x_{1+D}$ to $x_{K-D-\lambda}+x_{K-\lambda}$. Therefore for every {$k\leq {K-D-\lambda}$}, the code book consists of the symbol $x_{k} + x_{k+D}$. Since the symbol $x_{k+D}$ is the antidote of $k^{th}$ receiver, $k^{th}$ receiver can decode its required symbol $x_{k}$. \\

\noindent
\emph{Case (ii).} ${K-D-\lambda}<k \leq{K-D}$:\\
To decode $x_{k}$ in this range, let \mbox{$k=K-D-\lambda+r$}. From the construction of the code, the code book consists of the code symbol $x_{K-\lambda+r-\frac{D}{\lambda}\lambda}+x_{K-\lambda+r-(\frac{D}{\lambda}-1)\lambda}+\dots+x_{K-\lambda+r-\lambda}+x_{K-\lambda+r}$ where $r = \{1,2,\dots,\lambda\}$. For the first message symbol $x_{K-\lambda+r-\frac{D}{\lambda}\lambda}=x_{k}$ in these code symbols every other message symbol is the antidote. Thus we can decode $x_{k}$ in the range ${K-D-\lambda}<k \leq{K-D}$.\\

\noindent
\emph{Case (iii).} $k>K-D$:\\
To decode $x_{k}$ in this range, let $k=K-\lambda+r-s \lambda$ where \mbox{$r= \{1,2,\dots,\lambda\}$} and \mbox{$s = \{0,1,2,\dots,\frac{D}{\lambda}-1\}$}. The code book consists of the code symbol  $C_{r}=x_{K-\lambda+r-\frac{D}{\lambda}\lambda}+x_{K-\lambda+r-(\frac{D}{\lambda}-1)\lambda}+\dots+x_{K-\lambda+r-(s+1) \lambda}+\mathbf{x_{K-\lambda+r-s \lambda}} +x_{K-\lambda+r+(s-1) \lambda}+\dots+x_{K-\lambda+r-\lambda}+x_{K-\lambda+r}$. For the receiver $R_{k}=R_{K-\lambda+r-s \lambda}$, the message symbols present before $x_{k}$ in the above code symbol $C_{r}$ are interference and the message symbols present after $x_{k}$ are the antidotes. For every ${K-D-\lambda}<j \leq {K-\lambda}$, the code book consists of the code symbols $x_{j}+x_{j-D},\ x_{j-D}+x_{j-2D},\dots,x_{(j\ mod\ D)+D}+x_{j\ mod\ D}$. By adding these $\frac{K-\lambda}{D}-1$ code symbols we get $S_{j}$ = $x_{j}+x_{j\ mod\ D}$.  For the $k^{th}$ receiver, the antidotes are $x_{(k+1)\ mod\ K}, x_{(k+2)\ mod\ K},\dots,x_{(k+D)\ mod\ K}$. By using the symbols $S_{j}$ of the form $x_{j}+x_{j\ mod\ D}$ for ${K-D-\lambda}<j \leq {K-\lambda}$, the interference in the code symbol $C_{r}$ given above can be canceled for the required message symbol $x_{k}$. The symbol $x_{K-\lambda-r-(s+1) \lambda}=x_{k-\lambda}$ is interference to $R_{k}$. By summing the code symbols $x_{j}+x_{j-D},\ x_{j-D}+x_{j-2D},\dots,x_{(j\ mod\ D)+D}+x_{j\ mod\ D}$ where $j= K-\lambda-r-(s+1) \lambda$, we get $x_{K-\lambda-r-(s+1) \lambda} +x_{D-r-(s+1) \lambda}$. The message symbol $x_{D-r-(s+1) \lambda}$ is in the antidote to $R_{k}=R_{K-\lambda+r-s \frac{D}{\lambda}}$.  Thus every symbol for $k>K-D$ can be decoded.\\

The number of code symbols transmitted is equal to $\lambda$ more than the product of the number of values that $i$ takes and the number of values that $j$ takes. Number of code symbols=$D(n$-$1)$+$\lambda$=$D$.$(\frac{K-\lambda}{D}$-$1)$+$\lambda$=$K$-$D$. Now the other claims of the theorem follows along the same lines as in the proof of Theorem \ref{thm1}.



\end{proof}

By using the proposed code, \mbox{$K-D$} receivers can decode their wanted messages by using just one transmitted symbol. 

\begin{example}
\label{ex5}
$D=4,\ K=21,\ \lambda=1.$  $C=\frac{1}{K-D}=\frac{1}{17}.$ \\ 
Code $\mathfrak{C_{5}}=\{x_{1}+x_{5}, x_{2}+x_{6}, x_{3}+x_{7},x_{4}+x_{8}, x_{5}+x_{9}, \\
~~~~~~~~ x_{6}+x_{10}, x_{7}+x_{11}, x_{8}+x_{12}, x_{9}+x_{13}, x_{10}+x_{14}, \\
~~~~~~~~ x_{11}+x_{15}, x_{12}+x_{16}, x_{13}+x_{17}, x_{14}+x_{18}, x_{15}+x_{19}, \\
~~~~~~~~x_{16}+x_{20}, x_{17}+x_{18}+x_{19}+x_{20}+x_{21}\}$. \\
In terms of $L$ matrix this code is given by the  $21 \times 17$ matrix $L_5$ given below.

\begin{center}
{\tiny
$$ L_{5}= \left[\begin{array}{*{20}c}
   1 & 0 & 0 & 0 & 0 & 0 & 0 & 0 & 0 & 0 & 0 & 0 & 0 & 0 & 0 & 0 & 0 \\
   0 & 1 & 0 & 0 & 0 & 0 & 0 & 0 & 0 & 0 & 0 & 0 & 0 & 0 & 0 & 0 & 0 \\
   0 & 0 & 1 & 0 & 0 & 0 & 0 & 0 & 0 & 0 & 0 & 0 & 0 & 0 & 0 & 0 & 0 \\
   0 & 0 & 0 & 1 & 0 & 0 & 0 & 0 & 0 & 0 & 0 & 0 & 0 & 0 & 0 & 0 & 0 \\
   1 & 0 & 0 & 0 & 1 & 0 & 0 & 0 & 0 & 0 & 0 & 0 & 0 & 0 & 0 & 0 & 0 \\
   0 & 1 & 0 & 0 & 0 & 1 & 0 & 0 & 0 & 0 & 0 & 0 & 0 & 0 & 0 & 0 & 0 \\
   0 & 0 & 1 & 0 & 0 & 0 & 1 & 0 & 0 & 0 & 0 & 0 & 0 & 0 & 0 & 0 & 0 \\
   0 & 0 & 0 & 1 & 0 & 0 & 0 & 1 & 0 & 0 & 0 & 0 & 0 & 0 & 0 & 0 & 0 \\
   0 & 0 & 0 & 0 & 1 & 0 & 0 & 0 & 1 & 0 & 0 & 0 & 0 & 0 & 0 & 0 & 0 \\
   0 & 0 & 0 & 0 & 0 & 1 & 0 & 0 & 0 & 1 & 0 & 0 & 0 & 0 & 0 & 0 & 0 \\
   0 & 0 & 0 & 0 & 0 & 0 & 1 & 0 & 0 & 0 & 1 & 0 & 0 & 0 & 0 & 0 & 0 \\
   0 & 0 & 0 & 0 & 0 & 0 & 0 & 1 & 0 & 0 & 0 & 1 & 0 & 0 & 0 & 0 & 0 \\
   0 & 0 & 0 & 0 & 0 & 0 & 0 & 0 & 1 & 0 & 0 & 0 & 1 & 0 & 0 & 0 & 0 \\
   0 & 0 & 0 & 0 & 0 & 0 & 0 & 0 & 0 & 1 & 0 & 0 & 0 & 1 & 0 & 0 & 0 \\
   0 & 0 & 0 & 0 & 0 & 0 & 0 & 0 & 0 & 0 & 1 & 0 & 0 & 0 & 1 & 0 & 0 \\
   0 & 0 & 0 & 0 & 0 & 0 & 0 & 0 & 0 & 0 & 0 & 1 & 0 & 0 & 0 & 1 & 0 \\
   0 & 0 & 0 & 0 & 0 & 0 & 0 & 0 & 0 & 0 & 0 & 0 & 1 & 0 & 0 & 0 & 1 \\
   n0 & 0 & 0 & 0 & 0 & 0 & 0 & 0 & 0 & 0 & 0 & 0 & 0 & 1 & 0 & 0 & 1 \\
   0 & 0 & 0 & 0 & 0 & 0 & 0 & 0 & 0 & 0 & 0 & 0 & 0 & 0 & 1 & 0 & 1 \\
   0 & 0 & 0 & 0 & 0 & 0 & 0 & 0 & 0 & 0 & 0 & 0 & 0 & 0 & 0 & 1 & 1 \\
   0 & 0 & 0 & 0 & 0 & 0 & 0 & 0 & 0 & 0 & 0 & 0 & 0 & 0 & 0 & 0 & 1 \\
   \end{array}\right]$$
}
\end{center}

\end{example}
\noindent
{\bf Case VI:} 
There is an integer $\lambda$ such that $K-D$ divides $K-\lambda,$  $\lambda$ divides $K-D$ and the antidote pattern is given by \eqref{antidote1}.
\begin{theorem}
\label{thm6}
If $K-D$ divides $K-\lambda,$  $\lambda$ divides $(K-D)$ and the antidote pattern is as in \eqref{antidote1} then  the proposed scalar linear code is \\
$\mathfrak{C}=\{x_{i}+x_{i+m}+\dots+x_{i+(q-1)m}+x_{qm+1+(i-1) mod \lambda) }\} \\ 
~~~~~~~~~~ |\ i = \{1,2,\dots,m\}\}$ \\
where $K-D=m$, and $\frac{K-\lambda}{K-D}=q$. The proposed code is optimal.
\end{theorem}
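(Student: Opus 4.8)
The plan has two parts. The optimality claim is immediate and mirrors the proof of Theorem~\ref{thm1}: the code book $\mathfrak{C}$ contains exactly $m=K-D$ code symbols, so the code has length at most $K-D$; since the antidote pattern here is precisely the full one-sided pattern \eqref{antidote1}, the capacity is exactly $\frac{1}{K-D}$ by \cite{MCJ} (i.e.\ by \eqref{capacity1}), and hence the code is of optimal length. All the work is therefore in showing that the $m$ code symbols let every receiver recover its wanted message.

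For that I would first fix notation. Write $K=qm+\lambda$, so that $qm=K-\lambda$, $(q-1)m=D-\lambda$, and $i+\lambda+(q-1)m=i+D$; these three identities are used throughout. For $1\le i\le m$ write the $i$-th code symbol as $c_i=\sum_{t=0}^{q-1}x_{i+tm}+x_{qm+1+((i-1)\bmod\lambda)}$, and call $\{i,i+m,\dots,i+(q-1)m\}$ (indices in a single residue class mod $m$, all contained in $\{1,\dots,qm\}$) the \emph{arithmetic part} of $c_i$ and the last index (lying in $\{qm+1,\dots,K\}$) the \emph{tail} of $c_i$. For receiver $R_k$, its side information is the complement of the block $M_k=\{k-m+1,\dots,k\}$ of $m$ consecutive indices taken mod $K$. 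The observation that organizes the whole argument is: because $M_k$ is a block of $m$ consecutive indices and the arithmetic part of any $c_i$ lies in one residue class mod $m$, at most one term of that arithmetic part lies in $M_k$, and it is $x_k$ itself exactly when $x_k$ belongs to the arithmetic part of $c_i$.

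The decoding then splits as follows. If $k>K-\lambda$, put $k=qm+s$ with $1\le s\le\lambda$; then $x_k$ is the tail of $c_s$ and every index of the arithmetic part of $c_s$ is at most $(q-1)m+s<k-m+1$, hence outside $M_k$, so $R_k$ reads $x_k$ off $c_s$ alone. If $k\le K-\lambda$, write $k=i+(j-1)m$ with $1\le i\le m$, $1\le j\le q$, so $x_k$ is the $j$-th term of the arithmetic part of $c_i$. When $j\ge2$, the block $M_k$ lies entirely in $\{1,\dots,qm\}$, hence misses the tail of $c_i$, and by the observation above it contains no term of the arithmetic part of $c_i$ other than $x_k$; so again $R_k$ decodes from $c_i$ alone. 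When $j=1$ (so $k=i\le m$ and the antidote block $\{x_{i+1},\dots,x_{i+D}\}$ of $R_i$ does not wrap around), a direct computation with $qm=K-\lambda$ shows the tail of $c_i$ is among $R_i$'s antidotes exactly when $i>m-\lambda$, and then $R_i$ decodes from $c_i$ alone; while if $i\le m-\lambda$ one adds the auxiliary code symbol $c_{i+\lambda}$, which is a genuine code symbol since $i+\lambda\le m$ and has the \emph{same} tail as $c_i$ because $i+\lambda\equiv i\pmod\lambda$. In $c_i+c_{i+\lambda}$ the tails cancel, and the surviving terms besides $x_i$ are the arithmetic parts of $c_i$ and of $c_{i+\lambda}$, whose indices all lie in $\{i+1,\dots,i+D\}$ (this is exactly where $i+\lambda+(q-1)m=i+D$ is needed, to keep the shifted progression inside $R_i$'s antidote window), so $R_i$ recovers $x_i$ from $c_i+c_{i+\lambda}$.

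The only real obstacle is the bookkeeping in the last two sub-cases: verifying that the (possibly wrapped) missing block $M_k$ and the antidote window $\{i+1,\dots,i+D\}$ behave as claimed, which in each instance reduces to the three displayed identities. Counting the cases also yields the side remark that every receiver except $R_1,\dots,R_{m-\lambda}$ uses a single transmitted symbol and the remaining $K-D-\lambda$ receivers use two. One should also check the degenerate case $q=1$ (forced whenever $m\ge D$), where the arithmetic part of each $c_i$ is the single symbol $x_i$ and the argument above specializes correctly.
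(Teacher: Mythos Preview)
Your proposal is correct and follows essentially the same approach as the paper: the four subcases you identify (tail receivers $k>K-\lambda$; arithmetic receivers with $j\ge2$; $j=1$ with $i>m-\lambda$; $j=1$ with $i\le m-\lambda$ using $c_i+c_{i+\lambda}$) are exactly the paper's Cases (iv), (iii), (ii), (i), with the same decoding rule in each. Your framing via the missing block $M_k$ and the residue-class observation is a tidier way to organize the bookkeeping than the paper's direct verification, and your explicit treatment of the degenerate case $q=1$ is a detail the paper omits.
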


\begin{proof}
The number of code symbols transmitted is equal to the number of values that $i$ takes, which is equal to $K-D$.  In \cite{MCJ} the capacity of this problem is shown to be $\frac{1}{K-D}.$ This proves the optimality, when every receiver can decode its wanted message which is shown below: \\

\noindent
\emph{Case (i).} $k\in \{1,2,...,m-\lambda\}$:\\
 The $k^{th}$  receiver wants to decode $x_{k}$.  From the construction of the code for every \mbox{$k\in \{1,2,...,m-\lambda\}$}, the code book consists of the code symbols $\{x_{k}+x_{k+m}+\dots+x_{k+(q-1)m}+x_{qm+1+(k-1) mod \lambda)}\}$ and $\{x_{k+\lambda}+x_{k+\lambda+m}+\dots+x_{k+\lambda+(q-1)m}+x_{qm+1+(k-1) mod \lambda)}\}$. By adding the code symbols, the last message symbol in both the code words gets canceled and we get $\{x_{k}+x_{k+m}+\dots+x_{k+(q-1)m}+x_{k+\lambda}+x_{k+\lambda+m}+\dots+x_{k+\lambda+(q-1)m}\}$ . Here every other message symbol is in antidote of $R_{k}$. Thus $k^{th}$ receiver where $k \in \{1,2,\dots,m-\lambda\}$ can decode its message.\\
~
\noindent
\emph{Case (ii).} $k\in \{m-\lambda+1,m-\lambda+2,...,m\}$:\\
 The message symbol $x_{k}$ can be decoded from the $k^{th}$ code symbol because all other message symbols in this code symbols are in the antidotes to $R_{k}$.\\

\noindent
\emph{Case (iii).} $k\in \{m+1,m+2,...,mq\}$:\\
Let \mbox{$l=1+(k-1)$ mod $m$}. The code book consists of the code symbol $C_{l}=\{x_{l}+x_{l+m}+\dots+x_{l+(q-1)m}+x_{qm+1+(l-1) mod \lambda) }\}$. The message symbol $x_{k}$ can be decoded from the $l$ code symbol because all other message symbol in the $l^{th}$ code symbol are antidotes to $R_{k}$. \\

\noindent
\emph{Case (iv).} $k\in \{K-\lambda+1,K-\lambda+2,...,K\}$:\\
Let \mbox{$l=k$ mod $(K-D)$}. The message symbol $x_{k}$ can be decoded from the code symbol $C_{l}$=$\{x_{l}+x_{l+m}+\dots+x_{l+(q-1)m}+x_{qm+1+(l-1) mod \lambda) }\}$ because the message symbol $x_{k}$ symbol is present in $C_{l}$ and all other message symbols are in antidotes to $R_{k}$.



\end{proof}

By using the proposed code, \mbox{$D+\lambda$} receivers can decode their wanted messages by using just one transmitted symbol. All other remaining receivers can decode the wanted message by using two transmissions.

\begin{example}
\label{ex6}
Let $D=17, K=21, \lambda=1.$ Then we have $C=\frac{1}{K-D}=\frac{1}{4}.$ \\
Code $\mathfrak{C_{6}}=\{x_{1}+x_{5}+x_{9}+x_{13}+x_{17}+x_{21}, \\
~~~~~~~~~~~~~~~~~ x_{2}+x_{6}+x_{10}+x_{14}+x_{18}+x_{21}, \\
~~~~~~~~~~~~~~~~~ x_{3}+x_{7}+x_{11}+x_{15}+x_{19}+x_{21}, \\
~~~~~~~~~~~~~~~~~ x_{4}+x_{8}+x_{12}+x_{16}+x_{20}+x_{21}\}.$  

The following $21 \times 4$ matrix $L_6$ is the generator matrix for this code. \\
\begin{center}
{\footnotesize 
$$ L_{6}= \left[\begin{array}{*{20}c}
   1 & 0 & 0 & 0 \\
   0 & 1 & 0 & 0 \\
   0 & 0 & 1 & 0 \\
   0 & 0 & 0 & 1 \\
   1 & 0 & 0 & 0 \\
   0 & 1 & 0 & 0 \\
   0 & 0 & 1 & 0 \\
   0 & 0 & 0 & 1 \\
   1 & 0 & 0 & 0 \\
   0 & 1 & 0 & 0 \\
   0 & 0 & 1 & 0 \\
   0 & 0 & 0 & 1 \\
   1 & 0 & 0 & 0 \\
   0 & 1 & 0 & 0 \\
   0 & 0 & 1 & 0 \\
   0 & 0 & 0 & 1 \\
   1 & 0 & 0 & 0 \\
   0 & 1 & 0 & 0 \\
   0 & 0 & 1 & 0 \\
   0 & 0 & 0 & 1 \\
   1 & 1 & 1 & 1 \\
   \end{array}\right]$$
}
\end{center}

\end{example}
\noindent
{\bf Case VII:} 
There is an integer $\lambda$ such that $D+\lambda$ divides $K$ and $\lambda$ divides $D$ and 
\begin{equation}
\label{antidote7}
{\cal K}_k=\{x_{k+\lambda},x_{k+2\lambda},...,x_{k+D}\}.
\end{equation}
\begin{theorem}
\label{thm7}
For the case  $D+\lambda$ divides $K,$  $\lambda$ divides $D$ and the antidote pattern is as in \eqref{antidote7}, the proposed code is  \\
$\mathfrak{C}=\{x_{i+j\lambda}+x_{i+(j+1)\lambda}+\dots+x_{i+(j+p)\lambda}\\
~~~~~~~~~~~~~~~  | i =\{1,2,\dots,\lambda\},\ j= \{1,2,\dots,\frac{K-D-\lambda}{\lambda}$\}\} \\
where  $\frac{D}{\lambda}=p$ and $\frac{K}{D+\lambda}=n.$ \\
The capacity is $C=\frac{1}{K-D}$ and $minrank$ of the side information graph of this antidote pattern is $K-D$.
\end{theorem}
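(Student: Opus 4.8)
The plan is to follow the three-step template of Theorems~\ref{thm1}--\ref{thm4}: show the code lets every receiver decode, count the code symbols to get a rate of $\tfrac{1}{K-D}$, and then close the gap with the converse of Maleki \emph{et al.} First I would exploit the arithmetic to reduce to the case $\lambda=1$. Since $\lambda\mid D$ and $(D+\lambda)\mid K$, we get $\lambda\mid K$, so the message indices split into $\lambda$ residue classes modulo $\lambda$, each of size $K'=K/\lambda$. Every proposed code symbol $x_{i+j\lambda}+x_{i+(j+1)\lambda}+\cdots+x_{i+(j+p)\lambda}$ lives inside one class, and so does every antidote set ${\cal K}_k$ of \eqref{antidote7}; hence the instance decomposes into $\lambda$ independent copies of the one-sided problem on $K'$ messages with \emph{contiguous} antidotes $\{z_{s+1},\dots,z_{s+D'}\}$, where $D'=D/\lambda=p$ and, dividing $(D+\lambda)\mid K$ through by $\lambda$, $(D'+1)\mid K'$ with $K'/(D'+1)=n$. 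So it suffices to treat one class; in what follows I describe that sub-instance, the general claim being $\lambda$ disjoint copies of it.

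Within a class the code symbols are the length-$(D'+1)$ windows $S_j=z_j+z_{j+1}+\cdots+z_{j+D'}$, with $j$ running over a block of $K'-D'$ consecutive values (giving $\lambda(K'-D')=K-D$ symbols overall). For each such $j$, receiver $R_{z_j}$ recovers $z_j$ immediately from $S_j$ because the other $D'$ terms $z_{j+1},\dots,z_{j+D'}$ are exactly its antidotes; this serves $K-D$ of the $K$ receivers in one transmission each, mirroring Case~I and Case~II. For the remaining $D$ receivers (those whose wanted index starts no window) I would cancel interference via the telescoping identity $S_j+S_{j+1}=z_j+z_{j+D'+1}$ and its iterate $\sum_{t=0}^{m-1}\bigl(S_{j+t(D'+1)}+S_{j+t(D'+1)+1}\bigr)=z_j+z_{j+m(D'+1)}$: chaining these (which is where $(D'+1)\mid K'$ is used, so the chain closes up modulo $K'$) and then adding one raw window together with the receiver's known antidotes produces, for each such index $s$, a sum equal to $z_s$ plus only members of ${\cal K}_{z_s}$.

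Finally, the code has $K-D$ symbols, so the rate is at least $\tfrac{1}{K-D}$; conversely ${\cal K}_k=\{x_{k+\lambda},\dots,x_{k+D}\}\subseteq\{x_{k+1},\dots,x_{k+D}\}$, and for the latter, larger side information \cite{MCJ} gives capacity $\tfrac{1}{K-D}$, while deleting antidotes cannot raise capacity; hence $C=\tfrac{1}{K-D}$, the proposed code is of optimal length, and the minrank of the side-information graph is $K-D$. The wrap-up of these last implications is identical to Theorem~\ref{thm1}. I expect the only real work to be the second paragraph's interference cancellation for the $D$ tail receivers: making the telescoping bookkeeping precise so that, for every one of those indices, the non-target terms that survive are genuinely antidotes of that receiver — and this is exactly the step that consumes both hypotheses $\lambda\mid D$ and $(D+\lambda)\mid K$.
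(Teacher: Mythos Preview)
Your approach is correct and essentially matches the paper's: both use the sliding-window code, decode the first $K-D$ receivers directly from a single window, cancel interference for the remaining $D$ receivers via telescoping sums of window pairs, and close with the \cite{MCJ} converse exactly as in Theorem~\ref{thm1}. Your explicit reduction to $\lambda=1$ by splitting into residue classes modulo $\lambda$ is a clean organizational device that the paper leaves implicit in its double indexing $(i,j)$; the cancellation step itself is the same---your iterated identity $S_j+S_{j+1}=z_j+z_{j+D'+1}$ and the paper's pairs $C_{i,s(p+1)}+C_{i,(l+1)+s(p+1)}$ summed over $s=0,\dots,n-2$ produce identical linear combinations, so once you write out the bookkeeping for the tail receivers you will recover precisely the paper's Case~(ii) computation.
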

\begin{proof}
That the proposed code satisfies the requirement of all the receivers is shown below:\\

\noindent
\emph{Case (i).} $k \leq K-D$:\\
Let \mbox{$k=(i+j\lambda)$}. The $k^{th}$  receiver wants to decode $x_{k}$. As $i$ and $j$ runs, $k$ runs from 1 to $K-D$. From the construction of the code for every $k=(i+j\lambda)\leq(K-D)$, the code book consists of the code symbol $x_{i+j\lambda}+x_{i+(j+1)\lambda}+\dots+x_{i+(j+p)\lambda}$ for \mbox{$i = \{1,2,\dots,\lambda\}$}. Since the message symbols $x_{k+\lambda}$, $x_{k+2\lambda}$, \dots , $x_{k+p\lambda}$ are in the antidote to $R_{k} = R_{i+j\lambda}$, $k^{th}$ receiver can decode its required message symbol $x_{k}$. \\

\noindent
\emph{Case (ii).} $k > K-D$:\\
Let $k=i+K-D+l\lambda$ for $i = \{1,2,\dots,\lambda\}$ and $l = \{0,1,2,\dots,p-1\}$. As $i$ and $l$ runs, $k$ runs from $K-D+1$ to $K$. For $k > K-D$, the code book consists of the code symbol $x_{i+K-D-\lambda}+x_{i+K-D}+\dots+x_{i+K-D+(l-1)\lambda}+\underline{x_{i+K-D+l\lambda}}+x_{i+K-D+(l+1)\lambda}+\dots+x_{i+(K-\lambda)}$ where $i = \{1,2,\dots,\lambda\}$ such that $x_{k}=x_{i+K-D+l\lambda}$ is present in this code symbol. In the above code symbol every message symbol before $x_{k}$ is in interference to $R_{k}$ and every message symbol after $x_{k}$ is the antidote to $R_{k}$. We will cancel the interference by using other code symbols.\\

Consider $C_{i,j}=\{x_{i+j\lambda}+x_{i+(j+1)\lambda}+\dots+x_{i+(j+p)\lambda}\}$, then we can write $C_{i,q}=\{x_{i+K-D-\lambda}+x_{i+K-D}+\dots+x_{i+K-D+(l-1)\lambda}+\underline{x_{i+K-D+l\lambda}}+x_{i+K-D+(l+1)\lambda}+\dots+x_{i+(K-\lambda)}\}$.\\
From the sum $$S=\sum_{s=0}^{n-2} \{C_{i,s(p+1)} + C_{i,(l+1)+s(p+1)}\}$$ we get $S=x_{i+K-D-\lambda}+x_{i+K-D}+\dots+x_{i+K-D+(l-1)\lambda}+ x_{i}+x_{i +\lambda}+\dots+x_{i +l\lambda}$. In $S$, the message symbols $x_{i+K-D-\lambda},x_{i+K-D},\dots,x_{i+K-D+(l-1)\lambda}$ are interference to $R_{k}$ and the message symbols $x_{i},x_{i +\lambda},\dots,x_{i +l\lambda}$ are antidotes to $R_{k}$. Thus by computing $S$ we can establish the relation between the interference and antidotes of $R_{i}$. By adding $S$ to $C_{i,q}$, we get $\{x_{i+K-D-\lambda}+x_{i+K-D}+\dots+x_{i+K-D+(l-1)\lambda}+ x_{i}+x_{i +\lambda}+\dots+x_{i +l\lambda}\}$+$\{x_{i+K-D-\lambda}+x_{i+K-D}+\dots+x_{i+K-D+(l-1)\lambda}+x_{i+K-D+l\lambda}+x_{i+K-D+(l+1)\lambda}+\dots+x_{i+(K-\lambda)}\}$=$x_{i}+x_{i +\lambda}+\dots+x_{i +l\lambda}+x_{i+K-D+l\lambda}+x_{i+K-D+(l+1)\lambda}+\dots+x_{i+(K-\lambda)}$. In this sum every message symbol is a antidote to $R_{k}$, thus $x_{k}$ can be decoded.

This completes the proof for the claim that the code satisfies the requirements of all the receivers. \\

The rest of the claims can be proved along the same lines as in the proof of Theorem \ref{thm1} noting that the number of code symbols transmitted is equal to the product of the number of values that $i$ takes and the number of values that $j$ takes. Number of code symbols=$\lambda(\frac{K-D}{\lambda})$ =$K-D$.\\
\end{proof}

By using the proposed code, \mbox{$K-D$} receivers can decode their wanted messages by using just one transmitted symbol. \\
\begin{example}
\label{ex7}
Let $U=0,\ D=5,\ K=18,\ \lambda=1.$ Then $C=\frac{1}{K-D}=\frac{1}{13}.$ 
Theorem \ref{thm7} gives the code \\
$\mathfrak{C}_7= \{
x_{1}+x_{2}+x_{3}+x_{4}+x_{5}+x_{6},~ x_{2}+x_{3}+x_{4}+x_{5}+x_{6}+x_{7}, \\
x_{3}+x_{4}+x_{5}+x_{6}+x_{7}+x_{8}, ~ x_{4}+x_{5}+x_{6}+x_{7}+x_{8}+x_{9}, \\
x_{5}+x_{6}+x_{7}+x_{8}+x_{9}+x_{10}, ~ x_{6}+x_{7}+x_{8}+x_{9}+x_{10}+x_{11}, \\
x_{7}+x_{8}+x_{9}+x_{10}+x_{11}+x_{12}, ~ x_{8}+x_{9}+x_{10}+x_{11}+x_{12}+x_{13}, \\
x_{9}+x_{10}+x_{11}+x_{12}+x_{13}+x_{14}, ~ x_{10}+x_{11}+x_{12}+x_{13}+x_{14}+x_{15}, \\
x_{11}+x_{12}+x_{13}+x_{14}+x_{15}+x_{16}, ~ x_{12}+x_{13}+x_{14}+x_{15}+x_{16}+x_{17}, \\
x_{13}+x_{14}+x_{15}+x_{16}+x_{17}+x_{18}\}$. \\

The generator matrix for this code is given by 
\begin{center}
{\tiny 
$ L_{7}= \left[\begin{array}{*{20}c}
   1 & 0 & 0 & 0 & 0 & 0 & 0 & 0 & 0 & 0 & 0 & 0 & 0 \\
   1 & 1 & 0 & 0 & 0 & 0 & 0 & 0 & 0 & 0 & 0 & 0 & 0 \\
   1 & 1 & 1 & 0 & 0 & 0 & 0 & 0 & 0 & 0 & 0 & 0 & 0 \\
   1 & 1 & 1 & 1 & 0 & 0 & 0 & 0 & 0 & 0 & 0 & 0 & 0 \\
   1 & 1 & 1 & 1 & 1 & 0 & 0 & 0 & 0 & 0 & 0 & 0 & 0 \\
   1 & 1 & 1 & 1 & 1 & 1 & 0 & 0 & 0 & 0 & 0 & 0 & 0 \\
   0 & 1 & 1 & 1 & 1 & 1 & 1 & 0 & 0 & 0 & 0 & 0 & 0 \\
   0 & 0 & 1 & 1 & 1 & 1 & 1 & 1 & 0 & 0 & 0 & 0 & 0 \\
   0 & 0 & 0 & 1 & 1 & 1 & 1 & 1 & 1 & 0 & 0 & 0 & 0 \\
   0 & 0 & 0 & 0 & 1 & 1 & 1 & 1 & 1 & 1 & 0 & 0 & 0 \\
   0 & 0 & 0 & 0 & 0 & 1 & 1 & 1 & 1 & 1 & 1 & 0 & 0 \\
   0 & 0 & 0 & 0 & 0 & 0 & 1 & 1 & 1 & 1 & 1 & 1 & 0 \\
   0 & 0 & 0 & 0 & 0 & 0 & 0 & 1 & 1 & 1 & 1 & 1 & 1 \\
   0 & 0 & 0 & 0 & 0 & 0 & 0 & 0 & 1 & 1 & 1 & 1 & 1 \\
   0 & 0 & 0 & 0 & 0 & 0 & 0 & 0 & 0 & 1 & 1 & 1 & 1 \\
   0 & 0 & 0 & 0 & 0 & 0 & 0 & 0 & 0 & 0 & 1 & 1 & 1 \\
   0 & 0 & 0 & 0 & 0 & 0 & 0 & 0 & 0 & 0 & 0 & 1 & 1 \\
   0 & 0 & 0 & 0 & 0 & 0 & 0 & 0 & 0 & 0 & 0 & 0 & 1 \\
        
  \end{array}\right]$
}
\end{center}
\end{example}

~ \\

\noindent
{\bf Case VIII:} 
There is an integer $\lambda$ such that $K-D+\lambda$ divides $K$ and $\lambda$ divides $k-D$ and 
\begin{equation}
\label{antidote8}
{\cal K}_k=\left\{
                \begin{array}{ll}
                  \{x_{k+\lambda},\ x_{k+\lambda+(K-D)},\\x_{k+2\lambda+(K-D)},\ x_{k+2\lambda+2(K-D)},\\ ~~~~~~~~~\vdots ~~~~~~~~~~~~~~~~~~~ \vdots\\x_{k+(p-1)\lambda+(p-2)(K-D)},\ x_{k+(p-1)\lambda+(p-1)(K-D)},\\x_{k+p\lambda+(p-1)(K-D)}\}\\
                  \end{array}
              \right.
\end{equation}
where $p=\frac{K}{K-D+\lambda}$.\\

\begin{theorem}
\label{thm8}
If \mbox{$K-D+\lambda$} divides $K$ and $\lambda$ divides \mbox{$K-D$} and the antidote pattern is as in \eqref{antidote8}, then the scalar linear code is given by\\ 
$
\mathfrak{C}=\{x_{i}+x_{i+\lambda}+x_{i+\lambda+(K-D)}+x_{i+2\lambda+(K-D)}\\
~~~~ +x_{i+2\lambda+2(K-D)}+x_{i+3\lambda+2(K-D)} \\
~~~~ +\dots+x_{i+(p-1)\lambda+(p-1)(K-D)}+x_{i+p\lambda+(p-1)(K-D)}\\
~~~~~ ~|~i= 1,2,\dots,K-D\}
$\\ 
where $\frac{K}{K-D+\lambda}=p$ and $\frac{K-D}{\lambda}=m$ is of optimal length.
\end{theorem}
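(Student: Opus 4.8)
The plan is to follow the template established in Theorems \ref{thm1}--\ref{thm7}: first verify that the proposed code has exactly $K-D$ code symbols, then show that every receiver $R_k$ can recover $x_k$ from the transmitted symbols together with its antidote set $\mathcal{K}_k$ in \eqref{antidote8}, and finally invoke the capacity $\frac{1}{K-D}$ of the larger-antidote problem from \cite{MCJ} together with the monotonicity of capacity under removal of side information to conclude optimality (equivalently, that the minrank is $K-D$). Counting the code symbols is immediate, since the index $i$ ranges over $\{1,2,\dots,K-D\}$ and each value of $i$ contributes one code symbol; so the bulk of the work is the decoding argument, and the optimality half is identical to the corresponding paragraph in the proof of Theorem \ref{thm1}.

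For the decoding, I would partition the receiver index $k\in\{1,\dots,K\}$ according to which ``block'' of the cyclic structure it falls into, exactly as in Cases (i)--(iv) of Theorems \ref{thm3}, \ref{thm4}, \ref{thm5} and \ref{thm7}. Write each $k$ in the form $i + t\lambda + s(K-D)$ with $i\in\{1,\dots,K-D\}$ and appropriate ranges for $t,s$; the key observation is that the single code symbol $\mathfrak{C}_i = x_i + x_{i+\lambda} + x_{i+\lambda+(K-D)} + \dots + x_{i+p\lambda+(p-1)(K-D)}$ contains, by construction, precisely the message $x_k$ together with messages that (after the indices are reduced modulo $K$) all lie in $\mathcal{K}_k$ as listed in \eqref{antidote8}. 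For the receivers whose wanted message sits at the ``front'' of the relevant code symbol this is direct; for the receivers whose wanted message is preceded by interference terms inside $\mathfrak{C}_i$, I would, as in the proofs of Theorems \ref{thm4} and \ref{thm7}, form a suitable telescoping sum of code symbols $\sum \mathfrak{C}_{i'}$ over shifted values $i'$ so that the interfering messages cancel and what remains is $x_k$ plus antidotes of $R_k$. The arithmetic identities $K - D + \lambda \mid K$ and $\lambda \mid (K-D)$ are exactly what guarantee that these telescoping cancellations close up cleanly and that the shifted indices stay inside a single period of length $K$.

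The main obstacle I anticipate is bookkeeping: getting the index ranges for $i$, $t$ and $s$ correct so that every $k$ is covered exactly once, and checking that in the interference-cancellation case the particular combination of code symbols used has all of its surviving terms inside $\mathcal{K}_k$ and not accidentally equal to some other wanted message. This is the same difficulty that made the proofs of Theorems \ref{thm3} and \ref{thm7} long, and the antidote pattern \eqref{antidote8} is the most intricate of the ten, so the combinatorial verification will be correspondingly delicate. Once the decoding is established, the remaining claims follow verbatim from the argument in Theorem \ref{thm1}: the code length $K-D$ gives capacity at least $\frac{1}{K-D}$, the result of \cite{MCJ} for the super-pattern gives capacity at most $\frac{1}{K-D}$, hence $C=\frac{1}{K-D}$ and the code is of optimal length, i.e.\ the minrank of the side information graph equals $K-D$.
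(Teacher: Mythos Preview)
Your proposal is correct and follows essentially the same route as the paper's proof: count the $K-D$ code symbols, split the receivers into those that decode from a single code symbol $C_i$ (with $i\equiv k\pmod{K-D+\lambda}$) and those that must first cancel an interference term via a sum of shifted code symbols, and then conclude optimality exactly as in Theorem~\ref{thm1}. The only refinement to be aware of is that the ``direct'' set is larger than just the receivers whose message sits at the front of $C_i$: the paper's partition $\mathcal{A}_1/\mathcal{A}_2$ (with $|\mathcal{A}_1|=p(K-D)$) shows that for every $k\in\mathcal{A}_1$, regardless of where $x_k$ appears in $C_i$, all remaining terms are antidotes, while for $k\in\mathcal{A}_2$ exactly one preceding term $x_{k-\lambda}$ is interference and is eliminated by the specific telescoping sum $\sum_{s=0}^{m-1}C_{i-s\lambda}$.
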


\begin{proof}
We first prove that every receiver can decode its wanted message from the proposed  code.\\

\begin{scriptsize}
\noindent
Let $\mathcal{A}_{1}$=\mbox{$ \{1,2,\dots,K-D\}$}$\cup$\\ \{$K-D+\lambda+1,\dots,2(K-D)+\lambda\}\cup$\\\{$2(K-D)+2\lambda+1,\dots,3(K-D)+2\lambda\}\cup$\\
$~~~~~~~~~~~~~~~~~~~~~~~~~~~~~~ \vdots\\ \{(p-1)(K-D)+(p-1)\lambda+1,\ p(K-D)+(p-1)\lambda\}$ and \\ $\mathcal{A}_{2}$=$\{1,2,\dots,K\}\ \backslash\ \mathcal{A}_{1}$.\\
\end{scriptsize}

\noindent
\emph{Case (i).} $k \in \mathcal{A}_{1}$: The $k^{th}$  receiver wants to decode $x_{k}$. Let \mbox{$i=k$ mod $(K-D+\lambda)$}. From the construction of the code for every $k$ in the given range the code book consists of the code symbol $C_{i}$ = $x_{i}+x_{i+\lambda}+x_{i+\lambda+(K-D)}+x_{i+2\lambda+(K-D)}+x_{i+2\lambda+2(K-D)}+x_{i+3\lambda+2(K-D)}+\dots+x_{i+(p-1)\lambda+(p-1)(K-D)}+x_{i+p\lambda+(p-1)(K-D)}$. In the code symbol $C_{i}$, for $R_{k}$ all other symbols are the antidotes. Thus the $k^{th}$ receiver can decode its required message symbol $x_{k}$. \\ 

\noindent
\emph{Case (ii).} $k \in \mathcal{A}_{2}$:\\
  The $k^{th}$  receiver wants to decode $x_{k}$. Let \mbox{$i=k$ mod $(K-D+\lambda)$}. From the construction of the code for every $k$ in the given range the code book consists of the code symbol $C_{i}$ =  $x_{i}+x_{i+\lambda}+x_{i+\lambda+(K-D)}+x_{i+2\cdot\lambda+(K-D)}+x_{i+2\lambda+2(K-D)}+x_{i+3\lambda+2\cdot(K-D)}+\dots+x_{i+(p-1)\lambda+(p-1)(K-D)}+x_{i+p\lambda+(p-1)(K-D)}$. In the above code symbol for receiver $R_{k}$, the message symbol present just before $x_{k}$ (i.e. $x_{k-\lambda}$) is the interference to $R_{k}$ and all other message symbols are the antidotes. We will cancel the interference by using other code symbols. By adding the code symbols $$\sum_{s=0}^{m-1} C_{i-s\lambda}$$ we can replace the interference $x_{k-\lambda}$ with the antidote  $x_{k-K+D}$ and thus $x_{k}$ can be decoded.\\
The number of code symbols transmitted is equal to the number of values that $i$ takes and is equal to $K-D$. The capacity achieved by this code is $\frac{1}{K-D}$ the proof for which is along the same line as in the proof  Theorem \ref{thm1}.

\end{proof}

\small{By using the proposed code, \mbox{$(K-D)p$} receivers can decode their wanted messages by using just one transmitted symbol. Remaining receivers use \mbox{$m$} transmissions to decode their wanted messages.\\
\begin{example}
\label{ex8}
Let $K=24,\ D=19,\ \lambda=1$. Then $C=\frac{1}{K-D}=\frac{1}{5}$. Theorem \ref{thm8} gives the code\\
$\mathfrak{C}_{8}=\{x_{1}+x_{2}+ x_{7}+x_{8}+x_{13}+x_{14}+x_{19}+x_{20},\\ 
~~~~ x_{2}+x_{3}+ x_{8}+x_{9}+x_{14}+x_{15}+x_{20}+x_{21},\\ 
~~~~ x_{3}+x_{4}+ x_{9}+x_{10}+x_{15}+x_{16}+x_{21}+x_{22},\\ 
~~~~ x_{4}+x_{5}+ x_{10}+x_{11}+x_{16}+x_{17}+x_{22}+x_{23},\\ 
~~~~ x_{5}+x_{6}+ x_{11}+x_{12}+x_{17}+x_{18}+x_{23}+x_{24}\}$.\\
\noindent
The code is given by $\mathfrak{C}_{8}=\underline{x}L_{8}$ where $\underline{x}$=$[x_{1},x_{2}, \dots ,x_{24}]$ and $L_{8}$ is a $24 \times 5$ matrix which is given below.\\
\begin{center}
{\small
$L_{8} = \left[\begin{array}{*{20}c}
   1 & 0 & 0 & 0 & 0\\
   1 & 1 & 0 & 0 & 0\\
   0 & 1 & 1 & 0 & 0\\
   0 & 0 & 1 & 1 & 0\\
   0 & 0 & 0 & 1 & 1\\
   0 & 0 & 0 & 0 & 1\\
   1 & 0 & 0 & 0 & 0\\
   1 & 1 & 0 & 0 & 0\\
   0 & 1 & 1 & 0 & 0\\
   0 & 0 & 1 & 1 & 0\\
   0 & 0 & 0 & 1 & 1\\
   0 & 0 & 0 & 0 & 1\\
   1 & 0 & 0 & 0 & 0\\
   1 & 1 & 0 & 0 & 0\\
   0 & 1 & 1 & 0 & 0\\
   0 & 0 & 1 & 1 & 0\\
   0 & 0 & 0 & 1 & 1\\
   0 & 0 & 0 & 0 & 1\\
   1 & 0 & 0 & 0 & 0\\
   1 & 1 & 0 & 0 & 0\\
   0 & 1 & 1 & 0 & 0\\
   0 & 0 & 1 & 1 & 0\\
   0 & 0 & 0 & 1 & 1\\
   0 & 0 & 0 & 0 & 1\\
  \end{array}\right]$}
\end{center}
\end{example}

\noindent
{\bf Case IX:}
There is an integer $\lambda$ such that $D$ divides $K+\lambda$ and $\lambda$ divides $D$ and
\begin{equation}
\label{antidote9}
{\cal K}_k=\left\{
                \begin{array}{ll}
                  \{x_{k+D}\}$, if $\ k\leq K-2D+\lambda\\
                  \{x_{k+\lambda}, x_{k+2\lambda},...,x_{k+D}\}$, if $K-2D+\lambda<k\leq K
                  \end{array}
              \right.
\end{equation}
\begin{theorem}
\label{thm9}
If  $D$ divides $K+\lambda$ and $\lambda$ divides $D$ and the antidote pattern is as in \eqref{antidote9}, then the optimal scalar linear code proposed is given by\\
 $\mathfrak{C}=\{{x_{i+(j-1)D}+x_{i+jD}}|i = 1,2,\dots,D, ~~j = 1,2,\dots,n-2\}\\ 
~~~~~~~~~ \cup \{x_{K-2D+1+\lambda+i'}+x_{K-D+1+i'}+x_{K-\lambda+1+i' mod \lambda}| \\
~~~~~~~~~~~~~~~~~~~~~~~~~~~~~~~~~~ i' = \{0,1,2,\dots,p-1\}\}$\\
where $\frac{K+\lambda}{D}=n(>2)$, $p=K$ mod $D$ = $D-\lambda$.
\end{theorem}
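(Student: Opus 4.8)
The plan is to establish Theorem~\ref{thm9} by following the same template used for the earlier cases, in particular Theorems~\ref{thm1} and~\ref{thm5}, since Case~IX combines the ``chain'' structure of Case~I (long runs of consecutive pairwise sums $x_{i+(j-1)D}+x_{i+jD}$) with a small ``boundary correction'' block of three-term code symbols that handle the last few message indices. First I would verify the bookkeeping: there are $D(n-2)$ code symbols of the first type and $p=D-\lambda$ code symbols of the second type, so the total length is $D(n-2)+(D-\lambda)$; using $n=\frac{K+\lambda}{D}$ this is $K+\lambda-2D+D-\lambda=K-D$, matching the target length $\frac{1}{K-D}$ symbols per message. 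Once the length is confirmed to be $K-D$, optimality follows exactly as in Theorem~\ref{thm1}: the proposed code certifies capacity $\ge\frac{1}{K-D}$, while the converse from \cite{MCJ} (with the larger one-sided antidote set of \eqref{antidote1}) gives capacity $\le\frac{1}{K-D}$, and hence the minrank of the side information graph is $K-D$.

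The substantive part is the decodability argument, which I would split into the usual index ranges. For $k\le K-2D+\lambda$ (where $R_k$ only has $x_{k+D}$): as $i$ and $j$ range, the first family contains $x_k+x_{k+D}$, and since $x_{k+D}\in{\cal K}_k$ the receiver recovers $x_k$ directly; I must check the span of $i+(j-1)D$ over $i\in\{1,\dots,D\}$, $j\in\{1,\dots,n-2\}$ indeed covers $\{1,\dots,D(n-2)\}=\{1,\dots,K-2D+\lambda\}$ (using $D(n-2)=K-2D+\lambda$), so every such $k$ is an ``initial'' message of some chain code symbol. For the next block, the $p$ three-term code symbols $x_{K-2D+1+\lambda+i'}+x_{K-D+1+i'}+x_{K-\lambda+1+(i'\bmod\lambda)}$ give $R_k$ its message for $k=K-2D+1+\lambda+i'$, $i'=0,\dots,p-1$, i.e.\ $k\in\{K-2D+\lambda+1,\dots,K-D\}$, because for these receivers the antidote set is the sparse set $\{x_{k+\lambda},\dots,x_{k+D}\}$ and one checks that both $x_{K-D+1+i'}$ and $x_{K-\lambda+1+(i'\bmod\lambda)}$ lie in ${\cal K}_k$ (the first is at offset $D-\lambda\cdot(\text{multiple})$ away appropriately, the second at an offset that is a multiple of $\lambda$ and at most $D$). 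For $k>K-D$, which is the interesting regime, the wanted symbol $x_k$ appears either as the middle term $x_{K-D+1+i'}$ or the last term $x_{K-\lambda+1+(i'\bmod\lambda)}$ of one of the three-term code symbols, but now some of the other terms in that code symbol are interference rather than antidotes; I would cancel the interference by telescoping a suitable run of the first-family chain code symbols (exactly the $\frac{K+\lambda}{D}-1$-type sum used in Theorem~\ref{thm5}, Case~(iii)) to produce a relation $x_{j}+x_{j\bmod D}$, and substitute it so that every residual term is a genuine antidote of $R_k$. Wrapping indices modulo $K$ must be handled carefully here, since for $k$ near $K$ the ``down'' antidotes $\{x_{k+\lambda},\dots,x_{k+D}\}$ wrap around to small indices.

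The main obstacle I anticipate is the third regime: checking that the interference-cancellation via the chain sums terminates with exactly the right antidote and that no stray term survives, across all residues of $k$ modulo $D$ and modulo $\lambda$, while respecting the cyclic wraparound of both the code construction and the antidote set \eqref{antidote9}. Concretely I expect to need the hypotheses $D\mid K+\lambda$ and $\lambda\mid D$ precisely at this point, to guarantee that the telescoped chain relation lands on an index congruent to the one appearing in the boundary block and that $K-\lambda+1+(i'\bmod\lambda)$ cycles correctly as $i'$ runs through $0,\dots,D-\lambda-1$. After decodability is fully verified, the remaining assertions (capacity $=\frac{1}{K-D}$, minrank $=K-D$, and the count of receivers needing one transmission versus several) follow verbatim from the arguments in the proof of Theorem~\ref{thm1}, counting the chain length $n-1$ as the number of transmissions for the ``wrap-around'' receivers and $1$ for the initial receivers of each chain and each of the boundary three-term symbols.
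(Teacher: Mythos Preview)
Your overall template and the length/optimality bookkeeping are fine, and Cases $k\le K-2D+\lambda$ and $k>K-D$ are handled in the same spirit as the paper. The gap is in your middle block $k\in\{K-2D+\lambda+1,\dots,K-D\}$: you assert that for every such $k=K-2D+1+\lambda+i'$ the third term $x_{K-\lambda+1+(i'\bmod\lambda)}$ of the boundary code symbol lies in ${\cal K}_k$ because its offset is ``a multiple of $\lambda$ and at most $D$.'' This is false in general. Writing $i'=a\lambda+b$ with $0\le b<\lambda$, the offset of that third term from $x_k$ is $2D-(a+2)\lambda$, which exceeds $D$ whenever $a\le D/\lambda-3$, i.e.\ whenever $i'\le D-2\lambda-1$. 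So for the sub-range $K-2D+\lambda<k\le K-D-\lambda$ (nonempty as soon as $D>2\lambda$, e.g.\ already in Example~\ref{ex9}) the third term is interference, not an antidote, and a single boundary code symbol does not suffice.

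The paper splits this block in two. For $K-D-\lambda<k\le K-D$ (i.e.\ $i'\in\{D-2\lambda,\dots,D-\lambda-1\}$) the offset is exactly $D$ and your one-transmission argument works. For $K-2D+\lambda<k\le K-D-\lambda$ one must add the two boundary symbols $\mathcal{C}_{i'}$ and $\mathcal{C}_{i'+\lambda}$: they share the same third term $x_{K-\lambda+1+(i'\bmod\lambda)}$ (since $(i'+\lambda)\bmod\lambda=i'\bmod\lambda$), so the addition cancels it and leaves $x_k+x_{k+\lambda}+x_{k+D-\lambda}+x_{k+D}$, all of whose non-$x_k$ terms are genuine antidotes. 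This two-transmission step is the missing idea; once you insert it, your transmission count at the end should also be adjusted accordingly (the paper gets $K-2D+2\lambda$ receivers using one transmission, $D-2\lambda$ using two, and the rest using up to $2n-3$).
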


\begin{proof}
We will establish that by using the proposed code every receiver can decode its wanted message.\\
 
\noindent
\emph{Case (i).} $k\leq {K-2D+\lambda}$:\\
The $k^{th}$  receiver wants to decode $x_{k}$. The code book consists of the code symbol of the form $x_{i+(j-1)D}+x_{i+jD}$. As $i$ and $j$ runs, $x_{i+(j-1)D}+x_{i+jD}$ runs from $x_{1}+x_{1+D}$ to $x_{K-D-\lambda}+x_{K-\lambda}$. Therefore for every {$k\leq {K-2D+\lambda}$}, the code book consists of the symbol $x_{k} + x_{k+D}$. Since the symbol $x_{k+D}$ is the antidote of $k^{th}$ receiver, $k^{th}$ receiver can decode its required symbol $x_{k}$. \\

\noindent
\emph{Case (ii).} ${K-2D+\lambda}<k \leq{K-D-\lambda}$:\\
The receiver $R_{k}$ wants to decode $x_{k}=x_{K-2D+1+\lambda+i'}$. From the construction of the code, the code book consists of the code symbol $\mathcal{C}_{i'}=x_{K-2D+1+\lambda+i'}+x_{K-D+1+i'}+ x_{K-\lambda+1+i' mod \lambda}|\ i' = \{0,1,2,\dots,p-\lambda-1\}$. By adding $\mathcal{C}_{i'}$ and $\mathcal{C}_{i'+\lambda}$, we get  $S=x_{K-2D+1+\lambda+i'}+x_{K-D+1+i'}+x_{K-2D+1+2\lambda+i'}+x_{K-D+1+\lambda+i'}$. For the receiver $R_{k}$ every other message symbol in $S$ is antidote. Thus receiver $R_{k}$ can decode $x_{k}$.\\

\noindent
\emph{Case (iii).} ${K-D-\lambda}<k \leq{K-D}$:\\
The receiver $R_{k}$ wants to decode $x_{k}=x_{K-2D+1+\lambda+i'}$. From the construction of the code, the code book consists of the code symbol $\mathcal{C}_{i'}=x_{K-2D+1+\lambda+i'}+x_{K-D+1+i'}+x_{K-\lambda+1+i' mod \lambda}|\ i'=\{p-\lambda,\dots,p-1\}$. For the receiver $R_{k}$ for ${K-D+\lambda}<k \leq{K-D}$,  every other message symbol is the code symbol $x_{K-2D+1+\lambda+i'}+x_{K-D+1+i'}+x_{K-\lambda+1+i' mod \lambda}$ is antidote. Thus receiver $R_{k}$ can decode $x_{k}$.\\

\noindent
\emph{Case (iv).} ${K-D}<k \leq{K-\lambda}$:\\
The receiver $R_{k}$ wants to decode $x_{k}=x_{K-D+1+i'}$. From the construction of the code, the code book consists of the code symbol $x_{K-2D+1+\lambda+i'}+x_{K-D+1+i'}+x_{K-\lambda+1+i' mod \lambda}|\ i' = \{0,1,2,\dots,p-1\}$. 
For the required message symbol $x_{K-D+1+i'}$, message symbol $x_{K-\lambda+1+i' mod \lambda}$ is antidote and  $x_{K-2D+1+\lambda+i'}$ is interference. We will cancel the interference by using other code symbols. The code book also consists of code symbols of the form $\mathcal{C}_{i,j}={x_{i+(j-1)D}+x_{i+jD}}$. By adding the code symbols $$\sum_{j=1}^{n-2} C_{i'+1,\ j}$$ we get $x_{i'+1}+x_{K-2D+1+\lambda+i'}$. The message symbol $x_{i'+1}$ is antidote for the receiver $R_{k}$. Thus by using $x_{i'+1}+x_{K-2D+1+\lambda+i'}$, the interference symbol can be replaced with the antidote symbol and thus $x_{k}$ can be decoded.\\

\noindent
\emph{Case (v).} $k>K-\lambda$:\\
The receiver $R_{k}$ wants to decode $x_{k}=x_{K-\lambda+1+i' mod \lambda}$. From construction of the code the code book consists of the code symbol $x_{K-2D+1+\lambda+i'}+x_{K-D+1+i'}+x_{K-\lambda+1+i' mod \lambda}$.  For the message symbol $x_{K-\lambda+1+i' mod \lambda}$, the message symbols $x_{K-2D+1+\lambda+i'}$ and $x_{K-D+1+i'}$ are interference. We will cancel the interference by using other code symbols. The code book also consists of code symbols of the form $\mathcal{C}_{i,j}={x_{i+(j-1)D}+x_{i+jD}}$. By adding the code symbols $$\sum_{j=1}^{n-2} C_{(i' mod \lambda)+1,\ j}$$ we get $x_{(i' mod \lambda)+1}+x_{K-2D+1+\lambda+i' mod \lambda}$. The message symbol $x_{(i' mod \lambda)+1}$ is antidote for the receiver $R_{k}$. Similarly by adding the code symbols $$\sum_{j=1}^{n-2} C_{(i' mod \lambda)+1+D-\lambda,\ j}$$ we get $x_{(i' mod \lambda)+1+D-\lambda}+x_{K-D+1+i' mod \lambda}$. The message symbol $x_{(i' mod \lambda)+1+D-\lambda}$ is antidote for the receiver $R_{k}$. Thus by using $x_{(i' mod \lambda)+1}+x_{K-2D+1+\lambda+i' mod \lambda}$ and  $x_{(i' mod \lambda)+1+D-\lambda}+x_{K-D+1+i' mod \lambda}$, the interference symbols can be replaced with the antidote symbols and thus $x_{k}$ can be decoded.\\

The number of code symbols transmitted is equal to $p$ more than the product of the number of values that $i$ takes and the number of values that $j$ takes. Number of code symbols=$D(n-2)+p=D(\frac{K+\lambda}{D}$-$2)$+$D-\lambda$=$K$-$D$. The rest of the proof of optimality of the length is same as that of Theorem \ref{thm1}.

\end{proof}

\small{By using the proposed code, \mbox{$K-2D+2\lambda$} receivers can decode their wanted messages by using just one transmitted symbol. \mbox{$D-2\lambda$} receivers decode their wanted messages by using 2 transmissions each. Remaining receivers use atmost $2n-3$ transmissions each.\\ 
\begin{example}
\label{ex9}
Let $K=19,\ D=5,\ \lambda=1$. Then $C=\frac{1}{K-D}=\frac{1}{14}$. Theorem \ref{thm9} gives the code\\
$\mathfrak{C}_{9}=\{x_{1}+x_{6},\ x_{6}+x_{11},\ x_{11}+x_{15}+x_{19},\\ x_{2}+x_{7},\ x_{7}+x_{12},\ x_{12}+x_{16}+x_{19},\\ x_{3}+x_{8},\ x_{8}+x_{13},\ x_{13}+x_{17}+x_{19},\\ x_{4}+x_{9},\ x_{9}+x_{14},\ x_{14}+x_{18}+x_{19}\}\\ x_{5}+x_{10},\ x_{10}+x_{15},$\\
\noindent
The code is given by $\mathfrak{C}_{9}=\underline{x}L_{9}$ where $\underline{x}$=$[x_{1},x_{2}, \dots ,x_{19}]$ and $L_{9}$ is the $19 \times 14$ matrix which is given below.
\begin{center}
{\scriptsize
$$ L_{9}= \left[\begin{array}{*{20}c}
   1 & 0 & 0 & 0 & 0 & 0 & 0 & 0 & 0 & 0 & 0 & 0 & 0 & 0\\
   0 & 1 & 0 & 0 & 0 & 0 & 0 & 0 & 0 & 0 & 0 & 0 & 0 & 0\\
   0 & 0 & 1 & 0 & 0 & 0 & 0 & 0 & 0 & 0 & 0 & 0 & 0 & 0\\
   0 & 0 & 0 & 1 & 0 & 0 & 0 & 0 & 0 & 0 & 0 & 0 & 0 & 0\\
   0 & 0 & 0 & 0 & 1 & 0 & 0 & 0 & 0 & 0 & 0 & 0 & 0 & 0\\
   1 & 0 & 0 & 0 & 0 & 1 & 0 & 0 & 0 & 0 & 0 & 0 & 0 & 0\\
   0 & 1 & 0 & 0 & 0 & 0 & 1 & 0 & 0 & 0 & 0 & 0 & 0 & 0\\
   0 & 0 & 1 & 0 & 0 & 0 & 0 & 1 & 0 & 0 & 0 & 0 & 0 & 0\\
   0 & 0 & 0 & 1 & 0 & 0 & 0 & 0 & 1 & 0 & 0 & 0 & 0 & 0\\
   0 & 0 & 0 & 0 & 1 & 0 & 0 & 0 & 0 & 1 & 0 & 0 & 0 & 0\\
   0 & 0 & 0 & 0 & 0 & 1 & 0 & 0 & 0 & 0 & 1 & 0 & 0 & 0\\
   0 & 0 & 0 & 0 & 0 & 0 & 1 & 0 & 0 & 0 & 0 & 1 & 0 & 0\\
   0 & 0 & 0 & 0 & 0 & 0 & 0 & 1 & 0 & 0 & 0 & 0 & 1 & 0\\
   0 & 0 & 0 & 0 & 0 & 0 & 0 & 0 & 1 & 0 & 0 & 0 & 0 & 1\\
   0 & 0 & 0 & 0 & 0 & 0 & 0 & 0 & 0 & 1 & 1 & 0 & 0 & 0\\
   0 & 0 & 0 & 0 & 0 & 0 & 0 & 0 & 0 & 0 & 0 & 1 & 0 & 0\\
   0 & 0 & 0 & 0 & 0 & 0 & 0 & 0 & 0 & 0 & 0 & 0 & 1 & 0\\
   0 & 0 & 0 & 0 & 0 & 0 & 0 & 0 & 0 & 0 & 0 & 0 & 0 & 1\\
   0 & 0 & 0 & 0 & 0 & 0 & 0 & 0 & 0 & 0 & 1 & 1 & 1 & 1\\
   \end{array}\right]$$
}
\end{center}
\end{example}
~ \\

\noindent
{\bf Case X:} 
There is an integer $\lambda$ such that $K-D$ divides $K+\lambda$ and $\lambda$ divides $K-D$ and the antidote pattern is given by \eqref{antidote1}.

\begin{theorem}
\label{thm10}
If $K-D$ divides $K+\lambda$ and $\lambda$ divides $K-D$ and the antidote pattern is given by \eqref{antidote1}, then the proposed optimal scalar linear code is\\ 
$\mathfrak{C}=\{x_{k}+x_{k+m}+x_{k+2m}+ \dots+x_{k+(q-1)m}+x_{k+(q-1)m+\lambda}\\
~~~~~~+x_{k+(q-1)m+2\lambda}+ \dots + x_{k+(q-1)m+(s-2)\lambda}| k = 1,2,\dots,\lambda\} \\ 
\cup \{x_{k}+x_{k+m}+x_{k+2m}\dots+x_{k+(q-2)m}+x_{k+(q-1)m-\lambda}|\\
~~~~~~~~~~~~~~~~~~~~~~~~~~~~~~~~~~~~~~~~~ k = \{\lambda+1,\lambda+2,\dots,p\}\} \\ 
\cup \{x_{k}+x_{k+m}+x_{k+2m}+ \dots+x_{k+(q-2)m}+x_{k+(q-2)m+\lambda}+x_{k+(q-2)m+2\lambda}+\dots + x_{k+(q-2)m+(s-1)\lambda}|\ k = p+1,p+2,\dots,m\}$\\ 
where $K-D=m$, $K-D-\lambda=p$, $\frac{K+\lambda}{K-D}=q$  and \mbox{$\frac{K-D}{\lambda}=s$}.
\end{theorem}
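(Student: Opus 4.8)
The plan is to mirror the structure used for Theorems \ref{thm1}, \ref{thm8} and \ref{thm9}: pin down the length of the code, invoke the converse from \cite{MCJ}, and then verify decodability range by range. Counting the three families of code symbols, the first contributes $\lambda$ symbols (one for each $k=1,\dots,\lambda$), the second contributes $p-\lambda=K-D-2\lambda$ symbols ($k=\lambda+1,\dots,p$), and the third contributes $m-p=\lambda$ symbols ($k=p+1,\dots,m$), so the code has exactly $\lambda+(K-D-2\lambda)+\lambda=K-D$ transmitted symbols. Since the antidote pattern here is the one-sided neighbouring pattern \eqref{antidote1}, the symmetric capacity is $\frac{1}{K-D}$ by \eqref{capacity1}; hence, once decodability is established, the proposed code is of optimal length and the $minrank$ of the side information graph equals $K-D$, exactly as at the end of the proof of Theorem \ref{thm1}. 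All that remains is to show that every receiver $R_k$ recovers $x_k$ from linear combinations of the $K-D$ code symbols using only its antidotes ${\cal K}_k=\{x_{k+1},\dots,x_{k+D}\}$ (indices taken modulo $K$).

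For the receivers that decode from a single transmission, the key observation is that each code symbol is a sum $x_{a_0}+x_{a_1}+\dots$ whose index offsets are built from the two step sizes $m=K-D$ and $\lambda$. Using $qm=K+\lambda$ and $s\lambda=m$, I would exhibit in every code symbol a ``pivot'' term $x_{a}$ such that every other index in the sum reduces modulo $K$ into the window $\{a+1,\dots,a+D\}$; then $R_{a}$ decodes $x_{a}$ from that code symbol alone. Concretely, in a first-family symbol the pivot is the last $m$-step term $x_{k+(q-1)m}$, which covers the receivers $R_{(q-1)m+1},\dots,R_{(q-1)m+\lambda}$; in a second- or third-family symbol the pivot is $x_{k}$ itself, since the $jm$-offsets are all at most $(q-2)m\le D$ and the trailing $\lambda$-block terminates at offset exactly $D$. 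This accounts for $K-D$ receivers, and one can extract still more direct receivers by re-using the same code symbol with one of its other $m$-step terms as pivot; the exact count of receivers needing one, two, or more transmissions then follows by a tally, as in the earlier cases.

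For the remaining receivers I would use interference cancellation, handling the ranges $k\in\{1,\dots,\lambda\}$, $k\in\{m+1,\dots,K-m+\lambda\}$ and $k\in\{K-m+2\lambda+1,\dots,K\}$ in turn, in the spirit of Cases (i)--(v) of the proof of Theorem \ref{thm9}. For such a $k$, I would pick a code symbol containing $x_k$, isolate the terms of it whose indices lie outside ${\cal K}_k\cup\{x_k\}$ --- the interference --- and then add a prescribed combination of other code symbols, each running along an arithmetic progression of common difference $m$ or $\lambda$. The hypotheses $(K-D)\mid(K+\lambda)$ and $\lambda\mid(K-D)$ are exactly what force such progressions to close up cleanly modulo $K$: the intermediate terms telescope and the combination collapses to a relation in which only a controlled handful of message symbols survive. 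Chaining these relations, each interfering symbol is swapped for a message symbol in $\{x_{k+1},\dots,x_{k+D}\}$; after all swaps only $x_k$ and antidotes remain, and $R_k$ decodes.

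The conceptual content is light; the main obstacle is the bookkeeping in this last step. For each of the handful of ranges of $k$ one must (i) verify that the prescribed sum of code symbols telescopes to the claimed expression modulo $K$, and (ii) check that every non-target term surviving in that expression has index in $\{k+1,\dots,k+D\}\pmod K$, paying attention to when an offset wraps past $K$; the three interlocking index windows make this more delicate than in Theorems \ref{thm8} and \ref{thm9}, but no new idea is required, and one works under the same mild non-degeneracy assumption as in Case IX ($\lambda<K-D$ and $q$ not too small). Once decodability is in hand, the optimality of the length, the value $K-D$ of the $minrank$, and (as for Case VI, since the antidote pattern is the full one-sided pattern \eqref{antidote1}) the conclusion that the corresponding problem of \cite{MCJ} admits an explicit optimal scalar linear code with low decoding complexity all follow by the counting argument already used for Theorem \ref{thm1}.
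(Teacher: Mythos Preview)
Your proposal is essentially the paper's own argument: count the three families to get $K-D$ symbols, quote the $\frac{1}{K-D}$ converse from \cite{MCJ} via \eqref{capacity1}, then verify decodability by first exhibiting a ``pivot'' in each code symbol (the paper's Cases (ii)--(iii)) and finally cancelling interference for the residual ranges by telescoping sums of code symbols (the paper's Cases (i) and (iv)). The only slip is your bookkeeping of which receivers fall into which bucket: the entire block $k\in\{m+1,\dots,D+2\lambda\}$ decodes from a \emph{single} code symbol (take $i=1+(k-1)\bmod m$ and use the $i$-th symbol, exactly your ``other $m$-step term as pivot'' remark), so it should not appear among the ranges requiring cancellation; conversely, for $k\in\{1,\dots,\lambda\}$ the paper's specific trick is to add $\mathcal{C}_k$ (first family) and $\mathcal{C}_{k+p}$ (third family), whose trailing $\lambda$-blocks coincide and cancel, and for $k\in\{D+2\lambda+1,\dots,K\}$ one sums $\sum_{j=0}^{t}\mathcal{C}_{i+j\lambda}$ with $i=1+(k-1)\bmod\lambda$ and $t=\lceil (k-D-2\lambda)/\lambda\rceil$. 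With those two concrete combinations in hand your outline becomes the paper's proof verbatim.
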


\begin{proof}
We will prove that by using this code every receiver can decode its wanted message.\\

\noindent
\emph{Case (i).} $k\in \{1,2,\dots,\lambda\}$:\\
 The $k^{th}$  receiver wants to decode $x_{k}$.  From the construction of the code for every \mbox{$k\in \{1,2,\dots,\lambda\}$} the $k^{th}$ code symbol $\mathcal{C}_{k}$ and $(k+p)^{th}$ code symbol $\mathcal{C}_{k+p}$ have same message symbols in last $s-1$ positions. By adding $\mathcal{C}_{k}$ and $\mathcal{C}_{k+p}$, we get  $\mathcal{C}_{k}=x_{k}+x_{k+m}+x_{k+2m}+ \dots+x_{k+(q-2)m}+x_{k+p}+x_{k+m+p}+x_{k+2m+p}+\dots+x_{k+(q-2)m+p}$. In this sum every other message symbol is antidote of $R_{k}$. Thus $R_{k}$ can decode its wanted message $x_{k}$, $\forall$ $k\in \{1,2,\dots,\lambda\}$.\\
\noindent
\emph{Case (ii).} $k\in \{\lambda+1,\lambda+2,\dots,K-D\}$:\\
From the construction of the code for every $k\in \{\lambda+1,\lambda+2,\dots,K-D\}$, every other message symbol in the $k^{th}$ code symbol is antidote of $R_{k}$, thus $R_{k}$ can decode its wanted message $x_{k}$.\\
\noindent
\emph{Case (iii).} $k\in \{K-D+1,K-D+2,\dots,D+2\lambda\}$:\\
 Let $i$=1+$(k-1)$ mod $m$. From the construction of the code for every $k\in \{K-D+1,K-D+2,\dots,D+\lambda\}$, the $k^{th}$ message symbol is present in the $i^{th}$ code symbol. In the $i^{th}$ code symbol, every other message symbol present is antidote of  $R_{k}$, thus $R_{k}$ can decode its wanted message $x_{k}$.\\
\noindent
\emph{Case (iv).} $k\in \{D+2\lambda+1,D+2\lambda+2,\dots,K\}$:\\
Let $i$=1+$(k-1)$ mod $\lambda$. From the construction of the code, the code book consists of the code symbol $\mathcal{C}_{i}=x_{i}+x_{i+m}+x_{i+2m}+ \dots+x_{i+(q-1)m}+x_{i+(q-1)m+\lambda}+x_{i+(q-1)m+2\lambda}+ \dots + x_{i+(q-1)m+(s-2)\lambda}$ where $i = \{1,2,\dots,\lambda\}$.\\

Let $t=\lceil\frac{k-D-2\lambda}{\lambda}\rceil$. The $t$ message symbols present before $x_{k}=x_{i+(q-1)m+t\lambda}$ in the $i^{th}$ code symbol contribute to the interference of $R_{k}$.  By adding the code symbols $$\sum_{j=0}^{t} C_{i+j\lambda}$$the interference symbols for the receiver $R_{k}$ can be canceled and thus receiver $R_{k}$ can decode $x_{k}$. \\

The proof for optimality follows from the fact that the number of code symbols transmitted is equal to $\lambda+(p-\lambda)+(m-p)=m= K-D$ and the reasoning along the same line as the optimality for the code in Theorem \ref{thm1}. 
\end{proof}
By using the proposed code, \mbox{$D+\lambda$} receivers can decode their wanted messages by using just one transmitted code symbol. The remaining receivers decode their wanted messages by using atmost \mbox{$\frac{K-D-\lambda}{\lambda}$} transmissions each.
\begin{example}
\label{ex10}
Let $K=28,\ D=18,\ \lambda=2$.Then $C=\frac{1}{K-D}=\frac{1}{10}$. Theorem \ref{thm10} gives the code\\
$\mathfrak{C}_{10}= \{x_{1}+x_{11}+x_{21}+x_{23}+x_{25}+x_{27},\\ 
~~~~~~~~ x_{2}+x_{12}+x_{22}+x_{24}+x_{26}+x_{28},\\ 
~~~~~~~~ x_{3}+x_{13}+x_{21},~~~ x_{4}+x_{14}+x_{22}, ~~~ x_{5}+x_{15}+x_{23},\\
~~~~~~~~ x_{6}+x_{16}+x_{24}, ~~~ x_{7}+x_{17}+x_{25}, ~~~  x_{8}+x_{18}+x_{26},\\ 
~~~~~~~~ X_{9}+x_{19}+x_{21}+x_{23}+x_{25}+x_{27},\\ 
~~~~~~~~ x_{10}+x_{20}+x_{22}+x_{24}+x_{26}+x_{28}\}.$

The code is given by $\mathfrak{C}_{10}=\underline{x}L_{10}$ where $\underline{x}$=$[x_{1},x_{2}, \dots ,x_{28}]$ and $L_{10}$ is the  $28 \times 10$ matrix which is given below.\\
\begin{center}
{\scriptsize
$L_{10} = \left[\begin{array}{*{20}c}
   1 & 0 & 0 & 0 & 0 & 0 & 0 & 0 & 0 & 0\\
   0 & 1 & 0 & 0 & 0 & 0 & 0 & 0 & 0 & 0\\
   0 & 0 & 1 & 0 & 0 & 0 & 0 & 0 & 0 & 0\\
   0 & 0 & 0 & 1 & 0 & 0 & 0 & 0 & 0 & 0\\
   0 & 0 & 0 & 0 & 1 & 0 & 0 & 0 & 0 & 0\\
   0 & 0 & 0 & 0 & 0 & 1 & 0 & 0 & 0 & 0\\
   0 & 0 & 0 & 0 & 0 & 0 & 1 & 0 & 0 & 0\\
   0 & 0 & 0 & 0 & 0 & 0 & 0 & 1 & 0 & 0\\
   0 & 0 & 0 & 0 & 0 & 0 & 0 & 0 & 1 & 0\\
   0 & 0 & 0 & 0 & 0 & 0 & 0 & 0 & 0 & 1\\
   1 & 0 & 0 & 0 & 0 & 0 & 0 & 0 & 0 & 0\\
   0 & 1 & 0 & 0 & 0 & 0 & 0 & 0 & 0 & 0\\
   0 & 0 & 1 & 0 & 0 & 0 & 0 & 0 & 0 & 0\\
   0 & 0 & 0 & 1 & 0 & 0 & 0 & 0 & 0 & 0\\
   0 & 0 & 0 & 0 & 1 & 0 & 0 & 0 & 0 & 0\\
   0 & 0 & 0 & 0 & 0 & 1 & 0 & 0 & 0 & 0\\
   0 & 0 & 0 & 0 & 0 & 0 & 1 & 0 & 0 & 0\\
   0 & 0 & 0 & 0 & 0 & 0 & 0 & 1 & 0 & 0\\
   0 & 0 & 0 & 0 & 0 & 0 & 0 & 0 & 1 & 0\\
   0 & 0 & 0 & 0 & 0 & 0 & 0 & 0 & 0 & 1\\
   1 & 0 & 1 & 0 & 0 & 0 & 0 & 0 & 1 & 0\\
   0 & 1 & 0 & 1 & 0 & 0 & 0 & 0 & 0 & 1\\
   1 & 0 & 0 & 0 & 1 & 0 & 0 & 0 & 1 & 0\\
   0 & 1 & 0 & 0 & 0 & 1 & 0 & 0 & 0 & 1\\
   1 & 0 & 0 & 0 & 0 & 0 & 1 & 0 & 1 & 0\\
   0 & 1 & 0 & 0 & 0 & 0 & 0 & 1 & 0 & 1\\
   1 & 0 & 0 & 0 & 0 & 0 & 0 & 0 & 1 & 0\\
   0 & 1 & 0 & 0 & 0 & 0 & 0 & 0 & 0 & 1\\
  \end{array}\right]$}
\end{center}
\end{example}
~ \\

\section{conclusion}
In this paper optimal length scalar linear index codes have been proposed and studied for ten cases of symmetric multiple unicast index coding problems with antidotes smaller than those that are reported in \cite{MCJ}. 
In fact, by using proposed   codes we have proved the capacity for some specific antidote patterns considered in this manuscript. However, for arbitrary $K$ and $D$, the design of optimal linear codes to achieve capacity is open.

\end{document}